\RequirePackage{amssymb}
\documentclass{article}
\usepackage{fullpage}
\usepackage{amsmath,amsthm}
\usepackage{graphicx}
\usepackage{multirow}
\usepackage{amsfonts}
\usepackage{xcolor}
\usepackage{enumitem}

\usepackage{xspace}
\usepackage{booktabs}

\usepackage{algorithm}
\usepackage{algcompatible}
\newcommand{\To}{\textbf{to}\xspace}

\newcommand{\algorithmicreturn}{\textbf{return}}
\newcommand{\RETURN}{\STATE\algorithmicreturn{}\xspace}
\algnewcommand{\IfThen}[2]{%
  \State \algorithmicif\ #1\ \algorithmicthen\ #2}
\algnewcommand{\IfThenEnd}[2]{%
  \State \algorithmicif\ #1\ \algorithmicthen\ #2\ \algorithmicend\ \algorithmicif}
\algnewcommand{\IfThenElse}[3]{%
  \State \algorithmicif\ #1\ \algorithmicthen\ #2\ \algorithmicelse\ #3}
\algnewcommand{\ForDoEnd}[3][]{%
  \ifthenelse{\equal{#1}{}}%
  {\State \algorithmicfor\ #2\ \algorithmicdo\ #3\ \algorithmicend\ \algorithmicfor}
  {\State\label{#1}\algorithmicfor\ #2\ \algorithmicdo\ #3\ \algorithmicend\ \algorithmicfor}
}

\algnewcommand\algorithmicreadonly{\textbf{Read-only:}}
\algnewcommand\READONLY{\item[\algorithmicreadonly]}%

\newcommand{\D}{\ensuremath{\mathbb{D}}\xspace}
\newcommand{\F}{\ensuremath{\mathbb{F}}\xspace}
\newcommand{\Z}{\ensuremath{\mathbb{Z}}\xspace}

\newcommand{\bnd}[2]{\ensuremath{#1\mathopen{}(#2)\mathclose{}}}
\newcommand{\bnddisplay}[2]{\ensuremath{#1\mathopen{}\left(#2\right)\mathclose{}}}

\newcommand{\bigO}[1]{\bnd{\mathcal{O}}{#1}}
\newcommand{\bigOdisplay}[1]{\bnddisplay{\mathcal{O}}{#1}}

\newcommand{\threshold}{\ensuremath{\text{Threshold}}}
\DeclareMathOperator{\me}{\,\ensuremath{\mathrel{\textrm{--}}=}\,}
\DeclareMathOperator{\pe}{\,\ensuremath{\mathrel{+}=}\,}
\DeclareMathOperator{\fe}{\,\ensuremath{\mathrel{{\ast}}=}\,}
\DeclareMathOperator{\de}{\,\ensuremath{\mathrel{/}=}\,}

\newcommand\DFT{\mathsf{DFT}}
\newcommand\brDFT{\mathsf{brDFT}}
\newcommand\brTFT{\mathsf{brTFT}}
\newcommand\partTFT{\mathsf{partTFT}}

\makeatletter
\DeclareRobustCommand{\cev}[1]{%
  {\mathpalette\do@cev{#1}}%
}
\newcommand{\do@cev}[2]{%
  \vbox{\offinterlineskip
    \sbox\z@{$\m@th#1 x$}%
    \ialign{##\cr
      \hidewidth\reflectbox{$\m@th#1\vec{}\mkern4mu$}\hidewidth\cr
      \noalign{\kern-\ht\z@}
      $\m@th#1#2$\cr
    }%
  }%
}
\makeatother
\newcommand{\mat}[1]{\textbf{\ensuremath{#1}}}
\newcommand{\Transpose}[1]{{{\mat{#1}}^{\intercal}}\xspace}
\newcommand{\gTranspose}[2]{{{\mat{\begingroup\setlength\arraycolsep{#1}#2\endgroup}}^{\intercal}}\xspace}

\usepackage{fancyvrb}

\newenvironment{smatrix}{\begin{bmatrix}}{\end{bmatrix}}
\def\matrixsize#1#2{{{#1}\times{#2}}}
\def\MatrixProduct#1#2{{\mat{#1}\cdot\mat{#2}}}
\def\triadone{brown}
\def\triadtwo{red}
\def\triadthree{blue}
\def\triadfour{green}
\def\triadfive{magenta}
\def\triadsix{gray}
\def\triadseven{violet}

\usepackage{bbding}
\newcommand{\xyes}{{\color{teal}\CheckmarkBold}}
\newcommand{\xno}{{\color{purple}\XSolidBrush}}

\newcommand{\LLTr}{\ensuremath{{\text{Low}}}}
\newcommand{\Low}[1]{\ensuremath{{\LLTr\left({\mat{#1}}\right)}}}

\newcommand{\MUL}{\normalfont{\textbf{MUL}}\xspace}
\newcommand{\ADD}{\normalfont{\textbf{ADD}}\xspace}
\newcommand{\SCA}{\normalfont{\textbf{SCA}}\xspace}
\title{In-place accumulation of fast multiplication formulae}

\author{Jean-Guillaume Dumas\footnote{
  {Universit\'e Grenoble Alpes}.
  {Laboratoire Jean Kuntzmann, CNRS, UMR 5224}.
  {150 place du Torrent, IMAG - CS 40700},
  {38058 Grenoble, cedex 9}
  {France}.
\href{mailto:Jean-Guillaume.Dumas@univ-grenoble-alpes.fr,Bruno.Grenet@univ-grenoble-alpes.fr}{\{firstname.lastname\}@univ-grenoble-alpes.fr}}
\and{Bruno Grenet}\footnotemark[1]}

\usepackage{color,svgcolor}
\usepackage[plainpages=true]{hyperref}

\makeatletter
\hypersetup{%
  pdftitle={\@title},
  pdfauthor={Jean-Guillaume Dumas and Bruno Grenet},
  breaklinks=true,
  colorlinks=true,
  linkcolor=purple,
  citecolor=blue,
  urlcolor=teal,
  hypertexnames=false,
}
\makeatother
\usepackage[capitalize]{cleveref}

\newcommand{\Algname}{Algorithm}
\newcommand{\Algsname}{{\Algname}s}
\crefname{algorithm}{\Algname}{\Algsname}
\Crefname{algorithm}{\Algname}{\Algsname}
\crefname{proposition}{Prop.}{Props.}
\Crefname{proposition}{Proposition}{Propositions}

\newtheorem{theorem}{Theorem}
\newtheorem{corollary}[theorem]{Corollary}
\newtheorem{proposition}[theorem]{Proposition}

\newtheorem{lemma}[theorem]{Lemma}

\newtheorem{remark}[theorem]{Remark}
\newtheorem{example}[theorem]{Example}
\begin{document}
\maketitle
\begin{abstract}
This paper deals with simultaneously fast and in-place algorithms
for formulae where the result has to be linearly accumulated:
some output variables are also input variables,
linked by a linear dependency.
Fundamental examples include the in-place accumulated multiplication
of polynomials or matrices, $C\pe{AB}$.
The difficulty is to combine in-place computations with fast
algorithms:
those usually come at the expense of (potentially large) extra
temporary space, but with accumulation the output variables are not
even available to store intermediate values.
We first propose a novel automatic design of fast and in-place
accumulating algorithms for any bilinear formulae (and thus for
polynomial and matrix multiplication) and then extend it to any linear
accumulation of a collection of functions.
For this, we relax the in-place model to any
algorithm allowed to modify its inputs, provided that those are
restored to their initial state afterwards.
This allows us, in fine, to derive unprecedented in-place accumulating
algorithms for fast polynomial multiplications and for Strassen-like
matrix multiplications.
\end{abstract}
\section{Introduction}
Multiplication is one of the most fundamental arithmetic operations in
computer science and in particular in computer algebra and symbolic
computation.
In terms of arithmetic operations, for instance, from the work
of~\cite{Karatsuba:1963:multiplication,SchoenhageStrassen1971,Strassen:1969:GENO}, many
sub-quadratic (resp. sub-cubic) algorithms
were developed for polynomial (resp. matrix) multiplication.
But these fast algorithms usually come at the expense
of (potentially large) extra temporary space to perform the
computation that could hinder their practical efficiency,
due for instance to cache misses.
On the contrary, classical, quadratic (resp. cubic)
algorithms, when computed sequentially, quite often require very few
(constant) extra registers.
Further work then proposed simultaneously ``fast'' and ``in-place''
algorithms, for matrix or polynomial
operations~\cite{jgd:2009:WinoSchedule,Roche:2009:spacetime,Harvey:2010:issactft,Giorgi:2019:issac:reductions,Giorgi:2020:issac:inplace}.

We here extend the latter line of work for
\emph{accumulating} algorithms.
Actually, one of the main ingredients on non-accumulating algorithms is to
use the (free) space of the output as intermediate storage.
But when the result has to be accumulated, \emph{i.e.}, if the output is also
part of the input, this free space does not even exist.
To be able to design accumulating in-place algorithms we thus relax the in-place
model to allow algorithms to also modify their input, therefore to use
them as intermediate storage,
\emph{provided that they are restored to their initial state after
  completion of the procedure}.
This is in fact a natural possibility in many programming environments.
Furthermore, this restoration allows for recursive combinations of
such procedures, as the (non-concurrent) recursive calls will not
mess up the state of their callers.
We thus propose a generic technique transforming any bilinear
algorithm into an in-place algorithm under this model.
This directly applies to accumulating polynomial and matrix
multiplication algorithms, including fast ones.
Further, the technique actually generalizes to any linear
accumulation, \emph{i.e.}, not only bilinear formulae, provided that the input
of the accumulation can be itself reversibly computed in-place
(therefore also potentially in-place of some of its own input if
needed).

Next, we give our model for in-place computations
and recall classical in-place algorithms
in~\cref{ssec:inplace}.
We then detail in~\cref{sec:linacc} our novel technique for in-place
accumulation.
Finally, we apply this technique and further optimizations in order to
derive new fast and in-place algorithms for the accumulating
multiplication of matrices, \cref{sec:strassen}, and of polynomials,
\cref{sec:inpaccpol}.

\section{Computational model}\label{ssec:inplace}
Our computational model is an \emph{algebraic RAM}. Inputs
and outputs are arrays of ring elements.
(For simplicity, our algorithms are described over a
finite field $\F$, unless otherwise stated.)
The machine is made of \emph{algebraic registers} that each contain
one ring element, and \emph{pointer registers} that each contain
one pointer, that is one integer. Atomic operations are ring
operations on the algebraic registers and basic pointer arithmetic.
We assume that the pointer registers are large enough to store the
length of the input/output arrays.

Both inputs and outputs have read/write permissions.
But algorithms are only allowed to modify their inputs
\textbf{if their inputs are restored to their initial
state} afterwards.
In this model, we call \emph{in-place} an algorithm using only
\textbf{the space of its inputs, its outputs, and at most $\bigO{1}$ extra
space}. For recursive algorithms, some space may be required to store the recursive
call stack. (This stack is only made of pointers and its size is bounded
by the recursion depth of the algorithms. In practice, it is managed by the compiler.)
Nonetheless, we call \emph{in-place} a recursive algorithm whose only
extra space is the call stack.
In our complexity summaries (\Cref{tab:kara,tab:fft}), we include the
size of the stack.

The main limitations of this model are for black-box inputs, or for
inputs whose representations share some data.
A model with read-only inputs would be more powerful, but mutable
inputs turn out to be necessary in our case.
In particular, the algorithms we describe are \emph{in-place
with accumulation}. The archetypical example is a multiply-accumulate
operation $a \pe b\times c$. For such an algorithm, the condition is
that $b$ and $c$ are restored to their initial states at the end of
the computation, while $a$ (which is also part of the input) is replaced
by $a+bc$.
As a variant, we describe \emph{over-place} algorithms, that %
replace (parts of) the input by the output (e.g., $\vec{a}\gets b{\cdot}\vec{a}$).
Similarly, all of the input can be modified, provided that the
parts of the input that are not the output are restored afterwards.
In the following we signal by a ``\algorithmicreadonly'' tag the parts
of the input that the algorithm is not allowed to modify (the other
parts are modifiable as long as they are restored).
Note that in-place algorithms with accumulation are a special case of
over-place algorithms.
Our model is somewhat similar to catalytic machines and transparent
space~\cite{Buhrman:2014:STOC:catalytic}, but %
using only the input and output as catalytic space. Also, we do preserve the
(not in-place) time complexity, up to a (quasi)-linear overhead.
We refer to~\cite{Buhrman:2014:STOC:catalytic,Roche:2009:spacetime,Giorgi:2019:issac:reductions}
for more details. %

Classical algorithms for matrix or polynomial operations can be
performed in-place, without any call stack, as recalled
in~\cref{alg:classicmul}.
\begin{algorithm}[htbp]\caption{Quadratic/cubic in-place accumulating multiplications.}\label{alg:classicmul}
\begin{minipage}{.475\columnwidth}
\begin{algorithmic}[1]
\REQUIRE $A$, $B$, $C$, deg.  $m$, $n$, $m+n$.
\READONLY Polynomials $A$, $B$.
\ENSURE $C(X)\pe{A(X)B(X)}$
\FOR{$i$, $j$}
\STATE $C[i{+}j]\pe{A[i]B[j]}$;
\ENDFOR
\end{algorithmic}
\end{minipage}\hfill
\begin{minipage}{.475\columnwidth}
\begin{algorithmic}[1]
\REQUIRE $A$, $B$, $C$, $m{\times}\ell$,
$\ell{\times}n$, $m{\times}n$.
\READONLY Matrices $A$, $B$.
\ENSURE $C\pe{AB}$
\FOR{$i$, $j$, $k$}
\STATE $C_{ij}\pe{A_{ik}B_{kj}}$;
\ENDFOR
\end{algorithmic}
\end{minipage}
\end{algorithm}

\section{In-place linear accumulation}\label{sec:linacc}
Karatsuba polynomial
multiplication~\cite{Karatsuba:1963:multiplication}
and Strassen matrix multiplication~\cite{Strassen:1969:GENO}
are famous optimizations of bilinear formulae on their inputs: results
are linear combinations of products of bilinear combinations of the inputs.
To compute recursively such a formula in-place, we perform
each product one at a time. For each product, both factors are then
linearly combined in-place into one of the entry beforehand and
restored afterwards. The product of both entries is at that point
accumulated in one part of the output and then distributed to the
other parts.
The difficulty is to perform this distribution in-place, {\em without
recomputing the product}. Our idea is to pre-subtract one output from the
other, then accumulate the product to one output, and finally re-add the
newly accumulated output to the other one: overall both outputs just
have accumulated the product, in-place. Potential constant factors can
also be dealt with pre-divisions and post-multiplications.
Basically we need two kinds of in-place operations, and their
combinations.
First, as shown in~\cref{eq:basemul}, an in-place accumulation of a
quantity multiplied by a (known in advance) invertible constant:
\begin{equation}\label{eq:basemul}
\left\lbrace{}c\de\mu;~c\pe m;~c\fe\mu;\right\rbrace~\text{computes
  in-place}~c\gets{c+\mu\cdot{m}}.
\end{equation}
Second, as shown in~\cref{eq:basedist}, an in-place distribution of
the same quantity, without recomputation, to several outputs:
\begin{equation}\label{eq:basedist}
\left\lbrace{}d\me{c};~c\pe{m};~d\pe{c};\right\rbrace~\text{computes
  in-place}~\left\{\begin{array}{@{}l@{\,}l@{}}
c&\gets{c+m};\\
d&\gets{d+m}.\\
\end{array}
\right.
\end{equation}

\Cref{ex:bilin} shows how to combine several of these operations,
while also linearly combining parts of the input.
\begin{example}\label{ex:bilin}
  Suppose that for some inputs/outputs $a$, $b$, $c$, $d$, $r$, $s$, one wants to compute an
  intermediate product $p=(a+3b)*(c+d)$ only once and then distribute
  and accumulate that product to two of its outputs (or results),
  such that we have both $r\gets{r+5p}$ and $s\gets{s+2p}$.
  To perform this in-place, first accumulate $a\pe{3b}$ and $c\pe{d}$,
  then pre-divide $r$ by $5$, as in~\cref{eq:basemul}.
  Now we directly have $p=ac$ and it can be computed once,
  and then accumulated to $r$, and to $s$, if the latter is prepared:
  divide it by $2$, and pre-subtract $r$ or, equivalently,
  pre-subtract $2r$. This is
  $s\me{2r}$ followed by $r\pe{ac}$. After this, we can
  reciprocate (or unroll) the precomputations: this distributes
  the product to the other result and restores the read-only inputs to
  their initial state.
  This is summarized as:\\
  \begin{center}\fbox{%
      \ensuremath{\ \left\lbrace\begin{aligned}
	    a\pe{3b}; &\quad c\pe{d}; & r\de{5}\phantom{r};\\[-5pt]
	    s\me{2r}; &\quad r\pe{ac}; & s\pe{2r}; \\[-5pt]
	     a\me{3b};&\quad c\me{d}; &r\fe{5}\phantom{r};
	  \end{aligned}\right\rbrace
	\begin{array}{l}
	  \text{computes in-place:}\\[-0pt]
	  \left\lbrace\begin{aligned}
	    r& \gets{r+5(a+3b)(c+d)};\\[-5pt]
	    s& \gets{s+2(a+3b)(c+d)}.\\[-1pt]
	  \end{aligned}\right.
	\end{array}
      }}\end{center}
\end{example}

\Cref{alg:bilin} shows how to implement this in general, taking into
account the constant (or read-only) multiplicative coefficients of all
the linear combinations. We suppose that inputs are in three distinct
sets: left-hand sides, $\vec{a}$, right-hand sides, $\vec{b}$, and those
accumulated to the results, $\vec{c}$.
We denote by $\odot$ the point-wise multiplications of
left-hand sides by right-hand sides.
Then~\cref{alg:bilin} computes $\vec{c}\pe\mat{\mu}\vec{m}$, for
$\vec{m}=(\mat{\alpha}\vec{a})\odot(\mat{\beta}\vec{b})$, with
$\mat{\alpha}$, $\mat{\beta}$ and $\mat{\mu}$ matrices of
constants.

\begin{algorithm}[htbp]
  \caption{In-place bilinear formula.}\label{alg:bilin}
  \begin{algorithmic}[1]
    \REQUIRE $\vec{a}\in\F^m$, $\vec{b}\in\F^n$, $\vec{c}\in\F^s$;
    $\mat{\alpha}\in\F^{t{\times}m}$, $\mat{\beta}\in\F^{t{\times}n}$,
    $\mat{\mu}\in\F^{s{\times}t}$.
    \READONLY$\mat{\alpha}$, $\mat{\beta}$, $\mat{\mu}$ (all $3$ without zero-rows).
    \ENSURE $\vec{c}\pe\mat{\mu}\vec{m}$, for
    $\vec{m}=(\mat{\alpha}\vec{a})\odot(\mat{\beta}\vec{b})$.
    \FOR{$\ell=1$ \To $t$}
    \STATE\label{lin:alpha}Find one~$i$ s.t. $\alpha_{\ell,i}\neq{0}$;
    $a_i\fe\alpha_{\ell,i}$;
    \ForDoEnd[lin:foralpha]{$\lambda=1$ \To $m$, $\lambda\neq{i}$,
      $\alpha_{\ell,\lambda}\neq{0}$ }{ $a_i\pe\alpha_{\ell,\lambda}a_\lambda$}
    \STATE\label{lin:beta}Find one~$j$ s.t. $\beta_{\ell,j}\neq{0}$;
    $b_j\fe\beta_{\ell,j}$;
    \ForDoEnd[lin:forbeta]{$\lambda=1$ \To $n$, $\lambda\neq{j}$,
      $\beta_{\ell,\lambda}\neq{0}$}{ $b_j\pe\beta_{\ell,\lambda}b_\lambda$}
    \STATE\label{lin:mu}Find one~$k$ s.t. $\mu_{k,\ell}\neq{0}$;
    $c_k\de\mu_{k,\ell}$;
    \ForDoEnd[lin:formu]{$\lambda=1$ \To $s$, $\lambda\neq{k}$,
      $\mu_{\lambda,\ell}\neq{0}$}{$c_\lambda\me\mu_{\lambda,\ell}{c_k}$}
    \STATE\label{lin:product}$c_k\pe{a_i\cdot{b_j}}$\hfill\COMMENT{This is the product $m_\ell$, computed only once}
    \ForDoEnd[lin:distribmu]{$\lambda=1$ \To $s$, $\lambda\neq{k}$,
      $\mu_{\lambda,\ell}\neq{0}$}{$c_\lambda\pe\mu_{\lambda,\ell}{c_k}$}
    \hfill\COMMENT{undo~\ref{lin:formu}}
    \STATE\label{lin:fmuk}$c_k\fe\mu_{k,\ell}$;\hfill\COMMENT{undo~\ref{lin:mu}}
    \ForDoEnd[lin:distribbeta]{$\lambda=1$ \To $n$, $\lambda\neq{j}$,
      $\beta_{\ell,\lambda}\neq{0}$}{$b_j\me\beta_{\ell,\lambda}b_\lambda$}
    \hfill\COMMENT{undo~\ref{lin:forbeta}}
    \STATE\label{lin:dbetaj}$b_j\de\beta_{\ell,j}$;\hfill\COMMENT{undo~\ref{lin:beta}}
    \ForDoEnd[lin:distribalpha]{$\lambda=1$ \To $m$, $\lambda\neq{i}$,
      $\alpha_{\ell,\lambda}\neq{0}$}{$a_i\me\alpha_{\ell,\lambda}a_\lambda$}
    \hfill\COMMENT{undo~\ref{lin:foralpha}}
    \STATE\label{lin:dalphai}$a_i\de\alpha_{\ell,i}$;\hfill\COMMENT{undo~\ref{lin:alpha}}
    \ENDFOR
    \RETURN $\vec{c}$.
  \end{algorithmic}
\end{algorithm}

\begin{remark}\label{rk:parallelism}
  \Cref{lin:alpha,lin:foralpha,lin:beta,lin:forbeta,lin:mu,lin:formu,lin:distribmu,lin:fmuk,lin:distribbeta,lin:dbetaj,lin:distribalpha,lin:dalphai}   of~\cref{alg:bilin} are acting on independent parts of the
  input, $\vec{a}$ and $\vec{b}$, and of the output $\vec{c}$.
  If needed they could therefore be computed
  in parallel
  or in different orders,
  and even potentially grouped or factorized across the main loop (on $\ell$).
\end{remark}

To simplify the counting of operations,
we denote by \ADD both the addition or subtraction of elements, $\pe$
or $\me$; by \MUL the (tensor) product of elements, $\odot$;
and by \SCA the scaling by constants, $\fe$ or $\de$.
We also denote by $\#x$ (resp. $\sharp{x}$) the number of
non-zero (resp. $\not\in\{0,1,-1\}$) elements in a matrix $x$.

\begin{theorem}\label{thm:bilin}
\Cref{alg:bilin} is correct, in-place, and requires
$t$ \MUL,
$2(\#\alpha+\#\beta+\#\mu)-5t$ \ADD and
$2(\sharp\alpha+\sharp\beta+\sharp\mu)$ \SCA operations.
\end{theorem}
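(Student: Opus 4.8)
The plan is to prove the three claims—correctness, the in-place property, and the operation counts—separately, with correctness being the crux.

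For correctness, I would argue by loop invariant on the main loop over $\ell$. The key observation is that each iteration is structurally symmetric: lines~\ref{lin:alpha}--\ref{lin:mu} set up linear combinations (forming $a_i \gets \sum_\lambda \alpha_{\ell,\lambda}a_\lambda$ via scaling and accumulation, similarly for $b_j$, and pre-dividing $c_k$ while subtracting it from the other $c_\lambda$), line~\ref{lin:product} performs the single product $c_k \pe a_i b_j$, and lines~\ref{lin:distribmu}--\ref{lin:dalphai} undo the setup in reverse order. I would first verify the two atomic gadgets~\cref{eq:basemul,eq:basedist}: that $\{c\de\mu; c\pe m; c\fe\mu\}$ computes $c\gets c+\mu m$ (it yields $(c/\mu + m)\cdot\mu$) and that the distribution pattern accumulates $m$ to both $c$ and $d$. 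Then the essential point is that after line~\ref{lin:product}, the value $c_k$ has been augmented by exactly $m_\ell = (\mat{\alpha}\vec a)_\ell \cdot (\mat{\beta}\vec b)_\ell$ divided by $\mu_{k,\ell}$, and the ``undo'' block at lines~\ref{lin:distribmu},~\ref{lin:fmuk} plays the role of~\cref{eq:basedist}: re-adding the (now product-augmented) $c_k$ back to each $c_\lambda$ and then rescaling $c_k$ by $\mu_{k,\ell}$ distributes $\mu_{\lambda,\ell}m_\ell$ to every output $c_\lambda$. I would track $a_i$ and $b_j$ through the iteration to confirm they are restored exactly, so that $\vec a$ and $\vec b$ return to their initial state and the running products in later iterations see the original inputs; the invariant is that after iteration $\ell$, $\vec c$ has accumulated $\sum_{\ell'\le\ell}\mat\mu_{\cdot,\ell'}m_{\ell'}$ while $\vec a,\vec b$ are unchanged.

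The in-place claim follows by inspection: every operation in~\cref{alg:bilin} is an in-place scaling, accumulation, or multiply-accumulate acting on existing entries of $\vec a$, $\vec b$, $\vec c$, with only $\bigO{1}$ pointer registers (the indices $i,j,k,\ell,\lambda$) and no auxiliary array. The product $a_i b_j$ on line~\ref{lin:product} is accumulated directly into $c_k$, never stored separately. The hypothesis that $\mat\alpha,\mat\beta,\mat\mu$ have no zero rows guarantees that the pivot indices $i$, $j$, $k$ always exist, so the setup and restoration are well-defined.

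For the operation counts I would tally per iteration and sum. Each of the $t$ iterations performs exactly one \MUL (line~\ref{lin:product}), so $t$ \MUL total. For \SCA: each pivot selection scales by a nonzero constant, but scaling by $\pm1$ is free, so only pivots with coefficient $\notin\{0,\pm1\}$ cost operations; each such scaling is done once in setup and once in restoration, giving $2(\sharp\alpha+\sharp\beta+\sharp\mu)$. For \ADD: in row $\ell$ of $\mat\alpha$ there are $(\#\text{row})-1$ off-pivot nonzeros, each contributing one accumulation in setup and one in restoration; summing over rows gives $2(\#\alpha - t)$ from $\mat\alpha$, and likewise $2(\#\beta-t)$ from $\mat\beta$. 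The matrix $\mat\mu$ contributes $2(\#\mu-t)$ from the distribute/undo of the off-pivot entries, plus one extra subtraction and one extra addition per iteration from the $c\me\cdots$/$c\pe\cdots$ pattern of~\cref{eq:basedist}—accounting for the discrepancy between the $-2t$ one would naively expect and the stated $-5t$. Carefully reconciling this constant, i.e.\ confirming that the three $-2t$ terms combine with the product line's bookkeeping to yield exactly $-5t$, is the one place the arithmetic must be done with care; I expect this to be the main obstacle, and I would resolve it by writing out a single generic iteration and counting each \ADD explicitly.
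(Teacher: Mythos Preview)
Your approach matches the paper's: verify per iteration that $\vec a,\vec b$ are restored and that $\vec c$ accumulates the correct $\mu_{\cdot,\ell}m_\ell$, then tally operations. Correctness and the in-place claim are handled the same way.

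The \ADD bookkeeping, however, is tangled. You correctly obtain $2(\#\alpha-t)+2(\#\beta-t)+2(\#\mu-t)=2(\#\alpha+\#\beta+\#\mu)-6t$ from the three setup/restore pairs. The missing $+t$ does \emph{not} come from ``one extra subtraction and one extra addition'' in the distribution pattern (that would contribute $+2t$), and there is no naive ``$-2t$'' anywhere. It comes simply from line~\ref{lin:product}: the accumulation $c_k\pe a_ib_j$ is itself one \ADD per iteration, and it is the unique line with no matching undo. Hence the total is $t+2(\#\alpha+\#\beta+\#\mu-3t)=2(\#\alpha+\#\beta+\#\mu)-5t$, exactly as the paper computes.

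Your \SCA justification is also slightly off: you attribute all scalings to the pivot operations at lines~\ref{lin:alpha}, \ref{lin:beta}, \ref{lin:mu}, but $\sharp\alpha$ counts \emph{every} entry of $\alpha$ outside $\{0,\pm1\}$, not just pivots. A non-pivot coefficient $\alpha_{\ell,\lambda}\notin\{0,\pm1\}$ in the step $a_i\pe\alpha_{\ell,\lambda}a_\lambda$ also costs one constant multiplication. The stated formula $2(\sharp\alpha+\sharp\beta+\sharp\mu)$ is therefore correct, but your argument would undercount whenever a non-pivot entry is nontrivial.
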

\begin{proof}
First, as the only used operations ($\pe$, $\me$, $\fe$, $\de$) are
in-place ones, the algorithm is in-place.
Second, the algorithm is correct both for the input and the
output:
the input is well restored, as
$(\alpha_{\ell,i}a_i+\sum\alpha_{\ell,\lambda}a_\lambda-\sum\alpha_{\ell,\lambda}a_{\lambda})/\alpha_{\ell,i}=a_i$
and
$(\beta_{\ell,j}b_j+\sum\beta_{\ell,\lambda}b_\lambda-\sum\beta_{\ell,\lambda}b_\lambda)/\beta_{\ell,j}=b_j$;
the output is correct as
$c_\lambda-\mu_{\lambda,\ell}c_k/\mu_{k,\ell}+\mu_{\lambda,\ell}(c_k/\mu_{k,\ell}+a_ib_j)=c_\lambda+\mu_{\lambda,\ell}a_ib_j$
and
$(c_k/\mu_{k,\ell}+a_ib_j)\mu_{k,\ell}=c_k+\mu_{k,\ell}a_ib_j$.
Third, for the number of operations,
\cref{lin:alpha,lin:foralpha} require one multiplication by a constant for each
non-zero element $a_{\lambda}$ in the row and one less addition.
But multiplications and divisions by $1$ are no-op, and by $-1$ can be
dealt with subtraction. This is $\#\alpha-t$ additions
and $\sharp\alpha$ constant multiplications.
\cref{lin:beta,lin:forbeta} (resp. \cref{lin:mu,lin:formu}) are
similar for each non-zero element in $b_{\lambda}$ (resp. in
$\mu$). Finally, \cref{lin:product} performs $t$ multiplications of
elements and $t$ additions. The remaining lines double the number of
\ADD and \SCA.
This is $t+2(\#\alpha+\#\beta+\#\mu-3t)=2(\#\alpha+\#\beta+\#\mu)-5t$ \ADD.
\end{proof}

\begin{remark} Similarly, slightly more generic accumulation
  operations of the form
  $\vec{c}\gets\vec{\gamma}\odot\vec{c}+\mat{\mu}\vec{m}$, for a
  vector $\gamma\in\F^{s}$, can also be computed in-place: precompute
  first $\vec{c}\gets\vec{\gamma}\odot\vec{c}$, then
  call~\cref{alg:bilin}.
\end{remark}

For instance, to use~\cref{alg:bilin} with matrices or polynomials,
each product $m_\ell$ is in fact computed recursively.
Further, in an actual implementation of a fixed formula, one can
combine more efficiently the pre- and post-computations over
the main loop on $\ell$, as in~\cref{rk:parallelism}.
See~\cref{sec:strassen,sec:inpaccpol} for examples of recursive
calls, together with sequential optimizations and combinations.

In fact the method for accumulation, computing each bilinear
multiplication once is generalizable.
With the notations of~\cref{alg:bilin}, any algorithm of the form
$\vec{c}\pe\mat{\mu}\vec{m}$ can benefit from this technique,
provided that each $m_j$ can be obtained from a function
that can be computed in-place.
Let $F_j:\Omega\to\F$ be such a function on some inputs from a
space $\Omega$, for which an in-place algorithm exists.
Then we can accumulate it in-place, \emph{if it satisfies the following
constraint}, that it is not using its output space as an available
intermediary memory location.
Further, this function can be in-place in different models:
it can follow our model of~\cref{ssec:inplace}, if there is a way to
put its input back into their initial states, or some other model,
again provided that it follows the above constraint.
Then, the idea is just to keep from~\cref{alg:bilin}
the~\cref{lin:mu,lin:formu,lin:product,lin:distribmu,lin:fmuk},
replacing~\cref{lin:product} by the in-place call to $F_j$,
potentially surrounding that call by manipulations on the inputs of
$F_j$ (just like the one performed on $\vec{a}$ and $\vec{b}$
in~\cref{alg:bilin}).
We give examples of the application of the generalized method
of~\cref{thm:general} to non-bilinear formulae in~\cref{app:aat},
and we can thus show that:
\begin{theorem}\label{thm:general}
Let $\vec{c}\in\F^s$ and $\mat{\mu}\in\F^{s{\times}t}$, without zero-rows.
Let $\vec{F}=(F_j:\Omega\to\F)_{j=1..t}$ be a collection of functions
and $\omega\in\Omega$.
If all these functions are computable in-place, without using their
output space as an %
intermediary memory location,
then there exists an in-place algorithm computing
$\vec{c}\pe\mat{\mu}\vec{F}(\omega)$ in-place, requiring a single call to
each $F_j$,
together with
$(2\#\mu-t)$ \ADD and
$2\sharp\mu$ \SCA ops.
\end{theorem}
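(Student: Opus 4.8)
The plan is to specialize~\cref{alg:bilin} to its output-handling part only. Concretely, I would discard the lines that prepare and restore $\vec{a}$ and $\vec{b}$ (\cref{lin:alpha,lin:foralpha,lin:beta,lin:forbeta,lin:distribbeta,lin:dbetaj,lin:distribalpha,lin:dalphai}) and keep, for each $\ell=1,\dots,t$, exactly~\cref{lin:mu,lin:formu,lin:product,lin:distribmu,lin:fmuk}, with a single change: the product $a_i\cdot b_j$ of~\cref{lin:product} is replaced by the accumulating call $c_k\pe F_\ell(\omega)$. That is, for each $\ell$ I choose a pivot row $k$ in the support of column $\ell$ (as in~\cref{alg:bilin}), pre-divide $c_k\de\mu_{k,\ell}$, pre-subtract $c_\lambda\me\mu_{\lambda,\ell}c_k$ from every other row $\lambda$ with $\mu_{\lambda,\ell}\neq 0$, accumulate $F_\ell(\omega)$ into $c_k$, and finally post-add $c_\lambda\pe\mu_{\lambda,\ell}c_k$ and post-multiply $c_k\fe\mu_{k,\ell}$ to unroll the pre-computation.

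For correctness I would argue column by column and then sum. Writing $p=F_\ell(\omega)$, the per-column manipulation is algebraically identical to the $\mat{\mu}$-part of the proof of~\cref{thm:bilin} with $a_ib_j$ replaced by $p$: the pivot row ends as $(c_k/\mu_{k,\ell}+p)\mu_{k,\ell}=c_k+\mu_{k,\ell}p$, and each other touched row ends as $c_\lambda-\mu_{\lambda,\ell}c_k/\mu_{k,\ell}+\mu_{\lambda,\ell}(c_k/\mu_{k,\ell}+p)=c_\lambda+\mu_{\lambda,\ell}p$, while rows with $\mu_{\lambda,\ell}=0$ stay untouched. Hence iteration $\ell$ realizes exactly $\vec{c}\pe\mu_{\cdot,\ell}\,p$, and summing over the $t$ columns gives $\vec{c}\pe\mat{\mu}\vec{F}(\omega)$ (the no-zero-rows assumption, as in~\cref{thm:bilin}, excludes degenerate rows).

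The step I expect to be the crux is justifying that the single accumulating call $c_k\pe F_\ell(\omega)$ is legitimately in-place, and this is precisely where the hypothesis on $\vec{F}$ enters. At the moment of the call, $c_k$ already holds the scaled value $c_k^{\text{old}}/\mu_{k,\ell}$ and the remaining $c_\lambda$ hold their pre-subtracted values; all of these must survive the internal computation of $F_\ell(\omega)$. Because $F_\ell$ never uses its output register as intermediary memory, its only interaction with that register is the terminal write of the result; identifying that register with $c_k$ and turning the terminal write into an accumulation yields $c_k\pe F_\ell(\omega)$ while leaving $c_k^{\text{old}}/\mu_{k,\ell}$ intact throughout. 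In the bilinear case this was automatic, since the pure product $a_ib_j$ neither reads nor writes $c_k$; for a black-box $F_\ell$ it is exactly the stated constraint. I would also record the two supporting facts that make successive calls non-interfering: the accumulator $\vec{c}$ is disjoint from the argument space $\Omega$, and if $F_\ell$ mutates $\omega$ it restores it afterwards (by its own in-place guarantee), so every call sees the same $\omega$ and none disturbs the pre-subtracted $c_\lambda$.

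Finally, for the operation count I would transcribe the $\mat{\mu}$-bookkeeping of~\cref{thm:bilin}. The pre-scaling and pre-subtraction (\cref{lin:mu,lin:formu}) cost $\#\mu-t$ \ADD and $\sharp\mu$ \SCA, scalings by $\pm1$ costing no \SCA; the $t$ pivot accumulations add $t$ more \ADD on top of the single call to each $F_\ell$; and the undo lines (\cref{lin:distribmu,lin:fmuk}) mirror the pre-computation for a further $\#\mu-t$ \ADD and $\sharp\mu$ \SCA. This totals $2\#\mu-t$ \ADD and $2\sharp\mu$ \SCA, with exactly one call to each $F_\ell$, as claimed.
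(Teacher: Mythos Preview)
Your proposal is correct and follows precisely the paper's own argument: the paper states (just before the theorem) that one keeps from \cref{alg:bilin} exactly \cref{lin:mu,lin:formu,lin:product,lin:distribmu,lin:fmuk}, replacing \cref{lin:product} by the in-place call to $F_\ell$, and your correctness and operation-count arguments transcribe the $\mat{\mu}$-part of \cref{thm:bilin} as intended. If anything, you spell out more than the paper does, in particular the reason why the ``output not used as scratch'' hypothesis is exactly what makes the accumulating call $c_k\pe F_\ell(\omega)$ safe.
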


\section{In-place Strassen matrix multiplication with accumulation}\label{sec:strassen}
\subsection{7 recursive calls and 18 additions}\label{app:inplsw}

Considered as~${\matrixsize{2}{2}}$ matrices, the matrix
product with accumulation ~${\mat{C}\pe\MatrixProduct{A}{B}}$ could be computed using
Strassen-Winograd (S.-W.) algorithm by performing the following computations: %
\begin{gather}
\begin{array}{ll}
\mathcolor{\triadone}{\rho_{1}}\gets{\mathcolor{\triadone}{a_{11}}\mathcolor{\triadone}{b_{11}}},
\quad
\mathcolor{\triadthree}{\rho_{3}}\gets{(\mathcolor{\triadthree}{-a_{11}-a_{12}+a_{21}+a_{22}})\mathcolor{\triadthree}{b_{22}}},
\\
\mathcolor{\triadtwo}{\rho_{2}}\gets{\mathcolor{\triadtwo}{a_{12}}\mathcolor{\triadtwo}{b_{21}}},
\quad
\mathcolor{\triadfour}{\rho_{4}}\gets{\mathcolor{\triadfour}{a_{22}}(\mathcolor{\triadfour}{-b_{11}+b_{12}+b_{21}-b_{22}})},
\\
\mathcolor{\triadfive}{\rho_{5}}\gets{(\mathcolor{\triadfive}{a_{21}+a_{22}})(\mathcolor{\triadfive}{-b_{11}+b_{12}})},
\quad
\mathcolor{\triadsix}{\rho_{6}}\gets{(\mathcolor{\triadsix}{-a_{11}+a_{21}})(\mathcolor{\triadsix}{b_{12}-b_{22}})},
\\
\mathcolor{\triadseven}{\rho_{7}}\gets{(\mathcolor{\triadseven}{-a_{11}+a_{21}+a_{22}})(\mathcolor{\triadseven}{-b_{11}+b_{12}-b_{22}})},
\end{array}\nonumber\\
\label{eq:StrassenWinogradMultiplicationAlgorithm}
\begin{smatrix} c_{11} &c_{12} \\ c_{21} &c_{22} \end{smatrix}
\pe
\begin{smatrix}
\mathcolor{\triadone}{\rho_{1}} + \mathcolor{\triadtwo}{\rho_{2}} &
\mathcolor{\triadone}{\rho_{1}} - \mathcolor{\triadthree}{\rho_{3}} + \mathcolor{\triadfive}{\rho_{5}} - \mathcolor{\triadseven}{\rho_{7}}\\
\mathcolor{\triadone}{\rho_{1}} + \mathcolor{\triadfour}{\rho_{4}} +
\mathcolor{\triadsix}{\rho_{6}} - \mathcolor{\triadseven}{\rho_{7}} &
\mathcolor{\triadone}{\rho_{1}}+\mathcolor{\triadfive}{\rho_{5}} +
\mathcolor{\triadsix}{\rho_{6}} - \mathcolor{\triadseven}{\rho_{7}}
\end{smatrix}.
\end{gather}
This algorithm uses $7$ multiplications of half-size matrices and $24+4$
additions (that can be factored into only $15+4$~\cite{Winograd:1977:complexite}:
$4$ involving $A$, $4$ involving $B$ and $7$ involving the products,
plus $4$ for the accumulation).
This can be used recursively on matrix blocks, halved at each
iteration, to obtain a sub-cubic algorithm. To save on operations, it
is of course interesting to compute the products only once, that is
store them in extra memory chunks.
To date, up to our knowledge, the best versions that reduced this
extra memory space (also overwriting the
input matrices but not putting them back in place) were proposed
in~\cite{jgd:2009:WinoSchedule}:
their best sub-cubic accumulating product used $2$ temporary blocks
per recursive level, thus a total of extra memory required to
be~$\frac{2}{3}n^2$.
With~\cref{alg:bilin} we instead obtain an in-place sub-cubic algorithm for
accumulating matrix multiplication, without extra temporary field element.
From~\cref{eq:StrassenWinogradMultiplicationAlgorithm} indeed (see
also the representation
in~\cite{hopcroft:1973,Bshouty:1995:minwinoadd}, denoted \textsc{HM}),
we can extract
the matrices %
\begin{equation}\label{eq:alphabetamu}
\mu=\begin{smatrix}
\mathcolor{\triadone}{1}&\mathcolor{\triadtwo}{1}&\mathcolor{\triadthree}{0}&\mathcolor{\triadfour}{0}&\mathcolor{\triadfive}{0}&\mathcolor{\triadsix}{0}&\mathcolor{\triadseven}{0}\\
\mathcolor{\triadone}{1}&\mathcolor{\triadtwo}{0}&\mathcolor{\triadthree}{-1}&\mathcolor{\triadfour}{0}&\mathcolor{\triadfive}{1}&\mathcolor{\triadsix}{0}&\mathcolor{\triadseven}{-1}\\
\mathcolor{\triadone}{1}&\mathcolor{\triadtwo}{0}&\mathcolor{\triadthree}{0}&\mathcolor{\triadfour}{1}&\mathcolor{\triadfive}{0}&\mathcolor{\triadsix}{1}&\mathcolor{\triadseven}{-1}\\
\mathcolor{\triadone}{1}&\mathcolor{\triadtwo}{0}&\mathcolor{\triadthree}{0}&\mathcolor{\triadfour}{0}&\mathcolor{\triadfive}{1}&\mathcolor{\triadsix}{1}&\mathcolor{\triadseven}{-1}
\end{smatrix},\quad
\alpha=\begin{smatrix}
\mathcolor{\triadone}{1}&\mathcolor{\triadone}{0}&\mathcolor{\triadone}{0}&\mathcolor{\triadone}{0}\\
\mathcolor{\triadtwo}{0}&\mathcolor{\triadtwo}{1}&\mathcolor{\triadtwo}{0}&\mathcolor{\triadtwo}{0}\\
\mathcolor{\triadthree}{-1}&\mathcolor{\triadthree}{-1}&\mathcolor{\triadthree}{1}&\mathcolor{\triadthree}{1}\\
\mathcolor{\triadfour}{0}&\mathcolor{\triadfour}{0}&\mathcolor{\triadfour}{0}&\mathcolor{\triadfour}{1}\\
\mathcolor{\triadfive}{0}&\mathcolor{\triadfive}{0}&\mathcolor{\triadfive}{1}&\mathcolor{\triadfive}{1}\\
\mathcolor{\triadsix}{-1}&\mathcolor{\triadsix}{0}&\mathcolor{\triadsix}{1}&\mathcolor{\triadsix}{0}\\
\mathcolor{\triadseven}{-1}&\mathcolor{\triadseven}{0}&\mathcolor{\triadseven}{1}&\mathcolor{\triadseven}{1}
\end{smatrix},\quad
\beta=\begin{smatrix}
\mathcolor{\triadone}{1}&\mathcolor{\triadone}{0}&\mathcolor{\triadone}{0}&\mathcolor{\triadone}{0}\\
\mathcolor{\triadtwo}{0}&\mathcolor{\triadtwo}{0}&\mathcolor{\triadtwo}{1}&\mathcolor{\triadtwo}{0}\\
\mathcolor{\triadthree}{0}&\mathcolor{\triadthree}{0}&\mathcolor{\triadthree}{0}&\mathcolor{\triadthree}{1}\\
\mathcolor{\triadfour}{-1}&\mathcolor{\triadfour}{1}&\mathcolor{\triadfour}{1}&\mathcolor{\triadfour}{-1}\\
\mathcolor{\triadfive}{-1}&\mathcolor{\triadfive}{1}&\mathcolor{\triadfive}{0}&\mathcolor{\triadfive}{0}\\
\mathcolor{\triadsix}{0}&\mathcolor{\triadsix}{1}&\mathcolor{\triadsix}{0}&\mathcolor{\triadsix}{-1}\\
\mathcolor{\triadseven}{-1}&\mathcolor{\triadseven}{1}&\mathcolor{\triadseven}{0}&\mathcolor{\triadseven}{-1}
\end{smatrix}.
\end{equation}

All coefficients being $1$ or $-1$ the resulting in-place algorithm
can of course compute the accumulation $C\pe{AB}$ without constant
multiplications.
It thus requires $7$ recursive calls and, from~\cref{thm:bilin},
at most $2(\#\alpha+\#\beta+\#\mu-3t)=2(14+14+14-3*7)=42$ block
additions.
Just like the $24$ additions
of~\cref{eq:StrassenWinogradMultiplicationAlgorithm} can be factored
into $15$, one can optimize also the in-place algorithm.
For instance, looking at $\alpha$ we see that performing the products
in the order $\rho_{6}$, $\rho_{7}$, $\rho_{3}$, $\rho_{5}$ and
accumulating in $a_{21}$ enables to perform all additions/subtractions
in $A$ with only $6$ operations (this is in fact optimal, see~\cref{prop:six}).
This is similar for $\beta$ if the order
$\rho_{6}$, $\rho_{7}$, $\rho_{4}$, $\rho_{5}$ is used and
accumulation is in $b_{12}$.
Thus ordering for instance $\rho_{6}$, $\rho_{7}$, $\rho_{4}$,
$\rho_{3}$, $\rho_{5}$ will reduce the number of block additions to $26$.
Now looking at $\mu$ (more precisely at its transpose,
see~\cite{Kaminski:1988:transpose}), a similar reduction can be
obtained, e.g., if one of the orders
($\rho_{6}$, $\rho_{7}$, $\rho_{1}$, $\rho_{5}$)
or
($\rho_{5}$, $\rho_{7}$, $\rho_{1}$, $\rho_{6}$) is used
and accumulation is in $c_{22}$.

Therefore, using the ordering
$\rho_{6},\rho_{7},\rho_{1},\rho_{4},\rho_{3},\rho_{5},\rho_{2}$
requires only $18$ additions (plus $7$ accumulations in $C$), as shown
with~\cref{alg:ipsw}.
Thus, without thresholds and for powers of two, the dominant term of
the overall arithmetic cost is $8n^{\log_2(7)}$, for the
in-place version,
roughly a third more operations than the $6n^{\log_2(7)}$ dominant
term of the  cost for the version using extra temporaries.

We here give an in-place version of Strassen-Winograd algorithm for
matrix multiplication.
We first directly apply our~\cref{alg:bilin} to the classical, not
in-place Strassen-Winograd algorithm, following the specific scheduling
strategy of~\cref{sec:strassen}. This strategy enables to reduce the
number of additions obtained when calling~\cref{alg:bilin}, from $42+7$
to $18+7$: mostly remove successive additions/subtractions that are
reciprocal on either sub-matrices. This optimized version is given
in~\cref{alg:ipsw} and reaches the minimal possible number of
extra additions/subtractions, as shown in~\cref{thm:eighteen}.

\begin{algorithm}[ht]
\caption{In-place accumulating S.-W. matrix-multiplication.}\label{alg:ipsw}
\begin{algorithmic}
\REQUIRE
$A=\begin{smatrix} a_{11} & a_{12} \\  a_{21} &a_{22}\end{smatrix}$,
$B=\begin{smatrix} b_{11} & b_{12} \\  b_{21} &b_{22}\end{smatrix}$,
$C=\begin{smatrix} c_{11} & c_{12} \\  c_{21} &c_{22}\end{smatrix}$.
\ENSURE $C\pe{AB}$.
\end{algorithmic}
\begin{Verbatim}[commandchars=\\\{\},codes={\catcode`$=3\catcode`^=7\catcode`_=8}]
A$_{21}$ := A$_{21}$ - A$_{11}$; B$_{12}$ := B$_{12}$ - B$_{22}$; C$_{21}$ := C$_{21}$ - C$_{22}$;
\colorbox{cyan!10}{C$_{22}$ := C$_{22}$ + A$_{21}$ * B$_{12}$;}
A$_{21}$ := A$_{21}$ + A$_{22}$; B$_{12}$ := B$_{12}$ - B$_{11}$; C$_{12}$ := C$_{12}$ - C$_{22}$;
\colorbox{cyan!10}{C$_{22}$ := C$_{22}$ - A$_{21}$ * B$_{12}$;}
C$_{11}$ := C$_{11}$ - C$_{22}$;
\colorbox{cyan!10}{C$_{22}$ := C$_{22}$ + A$_{11}$ * B$_{11}$;}
C$_{11}$ := C$_{11}$ + C$_{22}$; B$_{12}$ := B$_{12}$ + B$_{21}$; C$_{21}$ := C$_{21}$ + C$_{22}$;
\colorbox{cyan!10}{C$_{21}$ := C$_{21}$ + A$_{22}$ * B$_{12}$;}
B$_{12}$ := B$_{12}$ + B$_{22}$; B$_{12}$ := B$_{12}$ - B$_{21}$; A$_{21}$ := A$_{21}$ - A$_{12}$;
\colorbox{cyan!10}{C$_{12}$ := C$_{12}$ - A$_{21}$ * B$_{22}$;}
A$_{21}$ := A$_{21}$ + A$_{12}$; A$_{21}$ := A$_{21}$ + A$_{11}$;
\colorbox{cyan!10}{C$_{22}$ := C$_{22}$ + A$_{21}$ * B$_{12}$;}
C$_{12}$ := C$_{12}$ + C$_{22}$; B$_{12}$ := B$_{12}$ + B$_{11}$; A$_{21}$ := A$_{21}$ - A$_{22}$;
\colorbox{cyan!10}{C$_{11}$ := C$_{11}$ + A$_{12}$ * B$_{21}$;}
\end{Verbatim}
\end{algorithm}
The number of temporary blocks of dimensions $\frac{n}{2}\times\frac{n}{2}$
required for the computation in~\cref{alg:ipsw} is compared to that of
previously known algorithms in~\cref{tab:inpsw}.
\begin{table}[ht]\centering
\caption{Reduced-memory accumulating S.-W. multiplication.}\label{tab:inpsw}
\begin{tabular}{lccc}
\toprule
Alg. & Temp. blocks & inputs & accumulation \\
\midrule
\cite{HussLederman:1996:ISA} & $3$ & {\color{teal} read-only} & {\xyes}\\
\cite{jgd:2009:WinoSchedule} & $2$ & {\color{teal} read-only} & {\xyes}\\
\cref{alg:ipsw} & $0$ & mutable & {\xyes}\\
\bottomrule
\end{tabular}
\end{table}

We now prove that $18$ additions is the minimal number of additions
required by an in-place algorithm resulting
from any bilinear algorithm for matrix multiplication using only $7$
multiplications.
For this we consider elementary operations on
variables (similar to elementary linear algebra operators):
\emph{variable-switching} (swapping variable $i$ and variable $j$);
\emph{variable-multiplying} (multiplying a variable by a constant);
\emph{variable-addition} (adding one variable, potentially multiplied
by a constant, to another variable).
An \emph{elementary program} is a program using only these
three kind of operations.
Now, the in-place implementation of a linear function on
its input, for $\mat{\alpha}\in\F^{t{\times}m}$ and $\vec{a}\in\F^m$,
is the computation of each of the $t$ coefficients of
$\mat{\alpha}\vec{a}$, using only elementary operations and
only the variables of $\vec{a}$ as temporary variables.
We start by proving in~\cref{lem:nocannorzero}
that in any bilinear algorithm for matrix
multiplication using only $7$ multiplications, the columns of the
associated matrices $\mat{\alpha},\mat{\beta},\mat{\mu}$
(as in~\cref{eq:alphabetamu})
cannot contain too many zeroes.
\begin{lemma}\label{lem:nocannorzero}
If
$(\mat{\alpha},\mat{\beta},\mat{\mu})\in\F^{7{\times}4}\times\F^{7{\times}4}\times\F^{4{\times}7}$
is the \textsc{HM} representation of a bilinear algorithm for matrix multiplication, then none of
$\mat{\alpha},\mat{\beta},\Transpose{\mat{\mu}}$
contains a zero column vector, nor a multiple of a standard basis vector.
\end{lemma}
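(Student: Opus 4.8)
The plan is to read the correctness of the bilinear algorithm as a rank-$7$ decomposition of the $\langle 2,2,2\rangle$ matrix-multiplication tensor, and then to exploit the fact that the $2$-dimensional \emph{slices} of that tensor all have rank exactly $2$. Flatten the four entries of $A$, $B$, $C$ into vectors $\vec a,\vec b,\vec c\in\F^4$, write $u_\ell$ and $v_\ell$ for the $\ell$-th rows of $\mat{\alpha}$ and $\mat{\beta}$, and $w_\ell$ for the $\ell$-th column of $\mat{\mu}$. By definition of the \textsc{HM} representation, the algorithm computes $\vec c=\sum_{\ell=1}^{7}(u_\ell\cdot\vec a)(v_\ell\cdot\vec b)\,w_\ell$ with $u_\ell\cdot\vec a=\sum_i\alpha_{\ell,i}a_i$, so its correctness is equivalent to the trilinear identity $T=\sum_{\ell=1}^{7}u_\ell\otimes v_\ell\otimes w_\ell$, where $T=\sum_{p,q,j}a_{pq}b_{qj}c_{pj}$ is the $2\times2$ matrix-multiplication tensor. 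Note that the columns of $\mat{\alpha}$ (resp. $\mat{\beta}$) are indexed by the entries of $A$ (resp. $B$), while the columns of $\Transpose{\mat{\mu}}$ are the rows of $\mat{\mu}$, indexed by the entries of $C$.

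First I would record the rigidity of the slices. Contracting $T$ against a single input coordinate gives, for each entry $a_{pq}$, the bilinear form $\partial T/\partial a_{pq}=b_{q1}c_{p1}+b_{q2}c_{p2}$ in $(\vec b,\vec c)$; its two rank-one terms involve disjoint $b$-coordinates and disjoint $c$-coordinates, so this slice has rank exactly $2$, over any field. The same computation shows that $\partial T/\partial b_{qj}=a_{1q}c_{1j}+a_{2q}c_{2j}$ has rank $2$, and that the coefficient of each output entry $c_{pj}$, namely $a_{p1}b_{1j}+a_{p2}b_{2j}$, is a bilinear form in $(\vec a,\vec b)$ of rank $2$.

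Next I would differentiate the decomposition in the same way. From $T=\sum_\ell u_\ell\otimes v_\ell\otimes w_\ell$ one gets, for each coordinate $a_i=a_{pq}$, the identity $\partial T/\partial a_i=\sum_{\ell=1}^{7}\alpha_{\ell,i}\,v_\ell\otimes w_\ell$, a sum of rank-at-most-one matrices, one per nonzero entry in the $i$-th column of $\mat{\alpha}$. Hence the rank of this slice is at most the number of nonzero entries in that column; comparing with the rank-$2$ value from the previous step forces each column of $\mat{\alpha}$ to contain at least two nonzero entries, so it is neither a zero column nor a multiple of a standard basis vector. The identical argument applied to $\partial T/\partial b_i=\sum_\ell\beta_{\ell,i}\,u_\ell\otimes w_\ell$ handles $\mat{\beta}$, and applied to the coefficient of $c_k$, which equals $\sum_\ell\mu_{k,\ell}\,u_\ell\otimes v_\ell$ (weighted by the $k$-th row of $\mat{\mu}$, that is, the $k$-th column of $\Transpose{\mat{\mu}}$), handles $\Transpose{\mat{\mu}}$.

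The only genuine work is bookkeeping: getting the correspondence between the \textsc{HM} matrices and the three tensor factors exactly right — in particular the transpose that sends rows of $\mat{\mu}$ to columns of $\Transpose{\mat{\mu}}$ — and checking that each of the three families of slices has rank precisely $2$ rather than merely at most $2$. Once this rank-$2$ rigidity is in hand, the conclusion that every column carries at least two nonzero entries, which is exactly the claimed absence of zero columns and of multiples of standard basis vectors, is immediate and uniform across $\mat{\alpha}$, $\mat{\beta}$ and $\Transpose{\mat{\mu}}$.
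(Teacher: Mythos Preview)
Your argument is correct and takes a genuinely different route from the paper's proof. The paper proceeds by invoking de~Groote's classification of $7$-multiplication algorithms and Bshouty's parametrization: every admissible $\mat{\alpha}$ (and by duality $\mat{\beta}$, $\Transpose{\mat{\mu}}$) is obtained from the specific Strassen--Winograd matrix of~\cref{eq:alphabetamu} by permutations and right-multiplication by some $K=G\otimes H$ with $G,H\in\mathrm{GL}_2(\F)$; it then writes out a generic column of $\mat{\alpha}\cdot K$ and rules out, by explicit case analysis on which of $u,v,x,y$ vanish, that six of its seven coordinates can be zero. Your proof instead reads the decomposition as $T=\sum_\ell u_\ell\otimes v_\ell\otimes w_\ell$ and observes that every axis-parallel slice of the $\langle 2,2,2\rangle$ tensor has rank exactly $2$, whence each column of $\mat{\alpha}$, $\mat{\beta}$, $\Transpose{\mat{\mu}}$ must carry at least two nonzero entries. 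This is shorter, avoids the external classification results entirely, and is strictly more general: it nowhere uses that the number of products is $7$, so it applies verbatim to \emph{any} bilinear algorithm for $2{\times}2$ matrix multiplication (and, mutatis mutandis, to $\langle m,k,n\rangle$ with the slice ranks $m$, $k$, $n$ in place of $2$). What the paper's approach buys is that the same machinery (de~Groote/Bshouty) is reused later for the sharper counts in \cref{prop:six}; your slice argument gives the present lemma cleanly but does not by itself yield those finer structural constraints.
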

\begin{proof}
The dimensions of the matrices indicate that the multiplicative
complexity of the algorithm is $7$.
From~\cite{Groote:1978:optimal} we know that all such bilinear
algorithms can be obtained from one another.
Following~\cite[Lemma~6]{Bshouty:1995:minwinoadd}, then any associated
$\mat{\alpha},\mat{\beta},\Transpose{\mat{\mu}}$ matrix is some row or column
permutation, or the multiplication by some $G\otimes{H}$ (the Kronecker
product of two invertible $2{\times}2$ matrices), of the matrices
of~\cref{eq:alphabetamu}.
By duality~\cite{Hopcroft:1973:duality}, see
also~\cite[Eq. (3)]{Bshouty:1995:minwinoadd},
it is also sufficient to consider any one of the $3$ matrices.
We thus let
$K=G\otimes{H}$. %
Then any column of $K$ is of the form
$\Transpose{\begin{bmatrix} ux , uy , vx , vy\end{bmatrix}}$,
where $\begin{smatrix} u\\v\end{smatrix}$ is a column of $G$ and
$\begin{smatrix}x\\y\end{smatrix}$ is a column of $H$.
Further as $G$ is invertible,
$u$ and $v$
cannot be both zero simultaneously
and, similarly,
$x$ and $y$
cannot be both zero simultaneously.
Now consider for instance
$\mat{\alpha}\cdot{K}$, with
$\mat{\alpha}$ of~\cref{eq:alphabetamu}.
Then any column $\vec{\theta}$ of $\mat{\alpha}\cdot{K}$
is of the form:\\
\(\gTranspose{1.5pt}{\begin{bmatrix}ux, uy, -ux-uy+vx+vy, vy, vx+vy, -ux+vx,-ux+vx+vy\end{bmatrix}}\)\!.\\
For such a column to be a multiple of a standard basis vector
or the zero vector, at least $6$ of its $7$ coefficients must be zero.
For instance, this means that at least two out of rows $1$, $2$ and $4$
must be zero: or that at least two of $ux$, $uy$ or $vy$ must be
zero. This limits us to three cases: (1) $u=0$, (2) $y=0$ or (3) $x=v=0$.
If $u=0$, then
$\vec{\theta}=v\gTranspose{1pt}{\begin{bmatrix}0,0,x+y,y,x+y,x,x+y\end{bmatrix}}$;
at least one of rows $4$ or $6$ has to be zero, thus,
w.l.o.g. suppose $x=0$, we obtain that
$\vec{\theta}=vy\gTranspose{1pt}{\begin{bmatrix}0,0,1,1,1,0,1\end{bmatrix}}$
with none of $v$ nor $y$ being zero (otherwise $G$ or $H$ is not
invertible);
such a column cannot be a multiple of a standard basis vector
nor the zero vector.
Similarly, if $y=0$, then
$\vec{\theta}=x\gTranspose{1pt}{\begin{bmatrix}u,0,-u+v,0,v,-u+v,-u+v\end{bmatrix}}$;
at least one of rows $1$ or $5$ has to be zero, thus,
w.l.o.g. suppose $v=0$, we obtain that
$\vec{\theta}=ux\gTranspose{1pt}{\begin{bmatrix}1,0,-1,0,0,-1,-1\end{bmatrix}}$;
such a column cannot be a multiple of a standard basis vector
nor the zero vector.
Finally, if $x=v=0$, then
$\vec{\theta}=uy\gTranspose{1pt}{\begin{bmatrix}0,1,-1,0,0,0,0\end{bmatrix}}$;
again that column cannot be a multiple of a standard basis vector
nor the zero vector.
\end{proof}
Now we show that any in-place elementary algorithm requires at least $1$
extra operation to put back the input in its initial state.
\begin{lemma}\label{lem:plusone}
Let $\vec{a}\in\F^m$ and $\mat{\alpha}\in\F^{t{\times}m}$ with at
least one row which is neither the zero row, nor a standard basis vector.
Now suppose that, without any constraints in terms of temporary
registers, $k$ is the minimal number of elementary operations required
to compute $\mat{\alpha}\vec{a}$.
Then any algorithm computing all the $t$ values of
$\mat{\alpha}\vec{a}$, in-place of $\vec{a}$, requires at least $k+1$
elementary operations.
\end{lemma}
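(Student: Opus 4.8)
The plan is to read any in-place elementary program $P$ as a sequence of states $s_0=\vec{a},s_1,\dots,s_p$, where $s_i$ records the contents of the $m$ registers of $\vec{a}$ after the $i$-th of the $p$ elementary operations. Since each of the three elementary operations (switching, scaling, addition) is linear, every register always holds a linear form in the input: the $j$-th coordinate of $s_i$ is $\vec{u}_{i,j}^{\intercal}\vec{a}$ for some $\vec{u}_{i,j}\in\F^m$, and this is invariant of the actual input. I will say that $P$ \emph{computes} $\mat{\alpha}\vec{a}$ if for every row index $\ell$ the $\ell$-th row of $\mat{\alpha}$ equals some $\vec{u}_{i,j}$, i.e. the value $(\mat{\alpha}\vec{a})_\ell$ is realized in some register at some step (with unconstrained registers this ``appears somewhere'' semantics coincides with the cost $k$, since a freshly written value can always be kept in its own register). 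Read as an ordinary, register-unconstrained program, $P$ already witnesses $p\ge k$; the whole content of the lemma is to extract the extra $+1$ from the in-place restoration requirement of \cref{ssec:inplace}.

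The key observation I would make is that being in-place of $\vec{a}$ forces the inputs to be restored, so $s_p=\vec{a}$: the coordinate forms of the final state are exactly the standard basis vectors $e_1,\dots,e_m$. By the hypothesis of the lemma, $\mat{\alpha}$ has a row that is neither the zero row nor a standard basis vector, so the corresponding output form is nontrivial and hence differs from each of $e_1,\dots,e_m$. Therefore this nontrivial form does not appear in the final state $s_p$. Every other row is either zero (not required to be produced) or a standard basis vector, and the latter already appear in $s_0=\vec{a}$. Consequently all the output forms that must be produced already appear among $s_0,\dots,s_{p-1}$.

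Finally I would truncate $P$ to its first $p-1$ operations. This leaves the states $s_0,\dots,s_{p-1}$ unchanged, so the shortened program still realizes every required output form in some register; it is thus a register-unconstrained elementary program computing $\mat{\alpha}\vec{a}$, of length $p-1$. By minimality of $k$ this gives $k\le p-1$, that is $p\ge k+1$. As $P$ was an arbitrary in-place algorithm, the bound holds for all of them, which is exactly the claim.

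The step I expect to be the crux is not the combinatorics but the modelling. One must (i) make explicit that ``in-place of $\vec{a}$'' includes restoring $\vec{a}$ to its initial state, since without this the statement is false --- the single operation $a_1\mathrel{+}=a_2$ produces the nontrivial form $a_1+a_2$ and matches $k=1$ rather than $k+1$ --- and (ii) justify that register contents are genuine linear forms in $\vec{a}$, so that ``the $\ell$-th row appears'' is well defined and stable under truncation. Once the restoration constraint is isolated, the combinatorial heart, namely that the final operation cannot be the one that first produces the unavoidable nontrivial form, is immediate, and truncating that ``wasted'' operation is what yields the promised $+1$.
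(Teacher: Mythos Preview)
Your proof is correct and follows essentially the same approach as the paper: both exploit that the restoration constraint $s_p=\vec a$ forces at least one operation beyond the last appearance of a nontrivial output form, and then truncate to obtain an unconstrained program of length at most $p-1\ge k$. The only cosmetic difference is that the paper truncates at the step where the \emph{last} nontrivial row is realized (and argues at least one restoration step must follow), whereas you simply drop the final operation after observing that no nontrivial form can live in $s_p$; both yield $p-1\ge k$.
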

\begin{proof}
Consider an in-place algorithm realizing $\mat{\alpha}\vec{a}$ in $f$
operations.
Any zero or standard basis vector row can
be realized without any operations on $\vec{a}$.
Now take this algorithm at the moment where the last of the other rows of
$\mat{\alpha}$ are realized (at that point all the $t$ values are
realized). Then this last realization (a non-trivial linear combination of the
initial values of $\vec{a}$) has to have been stored in one variable
of $\vec{a}$, say $a_i$.
Therefore, at this point, the in-place algorithm
has to perform at least one more operation to put back $a_i$ to its
initial state.
Therefore, by replacing all the in-place computations by operations on
extra registers and omitting the operation(s) that restore this $a_i$,
we obtain an algorithm with less than $f-1$ elementary operations that
realizes $\mat{\alpha}\vec{a}$ and thus: $(f-1)\geq{k}$.
\end{proof}

\begin{proposition}\label{prop:six}
For the in-place realization of each of the two linear operators
$\mat{\alpha}$ and $\mat{\beta}$, of any bilinear matrix
multiplication algorithm using only $7$ multiplications,
and the restoration of the initial states of their input,
at least $6$ operations are needed.
\end{proposition}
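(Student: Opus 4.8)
The plan is to prove the lower bound for $\mat{\alpha}$; the bound for $\mat{\beta}$ then follows by the identical argument (or by the duality invoked in \cref{lem:nocannorzero}). I would first fix a convenient bookkeeping device for an arbitrary in-place elementary program realizing $\mat{\alpha}$. Since the only non-free operations are variable-additions (all coefficients may be taken in $\{1,-1\}$ up to free scalings, and switches are mere relabelings), I would track the state of the four registers by a matrix $M\in\mathrm{GL}_4(\F)$ whose rows are the linear forms currently stored; $M$ starts and ends at the identity (this is the restoration requirement), and each counted operation replaces one row of $M$ by itself plus a scalar multiple of another row. The in-place condition is exactly that each of the seven rows of $\mat{\alpha}$ occurs as a row of $M$ at some step. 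From \cref{lem:nocannorzero}, $\mat{\alpha}$ has full column rank and no column is zero or a multiple of a standard basis vector, so every input variable is genuinely mixed into at least two products.

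Next I would record what the easy tools give, in order to isolate the real difficulty. \cref{lem:plusone} already yields a lower bound of $k+1$, where $k$ is the out-of-place additive complexity of $\mat{\alpha}$; but for an S.-W.-type $\mat{\alpha}$ one has $k=4$, so this route proves only $5$. The entire content of the proposition is therefore the \emph{extra} operation forced by the absence of any temporary register: out of place one may cache two partial sums and produce the four non-trivial forms with no backtracking, whereas in place every excursion of a register away from a standard basis vector must eventually be reversed.

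The core of the argument I would give is a per-coordinate ``out-and-back'' accounting. Looking, coordinate by coordinate in the original inputs, at the coefficients the trajectory must realize, the non-trivial rows of $\mat{\alpha}$ force the stored forms to acquire a nonzero coefficient in three input directions other than the accumulator's own, and because $M$ must return to the identity each such coefficient must leave $0$ and come back to $0$. I would make this quantitative with one potential per input coordinate, each changing by at most one per operation and forced to move away from and back to its initial value: three independent active directions then cost at least $2\times 3=6$ operations. \cref{lem:nocannorzero} is precisely what guarantees, uniformly over \emph{every} $7$-multiplication algorithm, that three genuinely independent directions are active, via the explicit column parametrization obtained there under the $G\otimes H$ reduction.

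The step I expect to be the main obstacle is ruling out the two ways a clever program might try to beat $6$: using an already-modified register as the source of an addition (``non-pristine'' sources), so that a single excursion might appear to serve several forms at once, and spreading the non-trivial forms across several registers rather than a single accumulator. I would address both by checking that the per-coordinate potentials are insensitive to \emph{which} register is modified and that the couplings introduced by non-pristine sources cannot reduce the total out-and-back displacement below $2$ per active direction; the universality over all $7$-multiplication algorithms would be handled by the same row/column-permutation and $G\otimes H$ reduction as in \cref{lem:nocannorzero}, verifying that both the count of active directions and the accounting are invariant under these transformations. Once the bound holds for $\mat{\alpha}$, the symmetric statement for $\mat{\beta}$ is immediate.
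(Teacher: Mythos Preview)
Your out-and-back intuition is the right endgame, but the plan has a real gap at exactly the step you yourself flag as the obstacle. The per-coordinate potentials you propose do each change by at most one per elementary operation, but a \emph{single} operation whose source row already mixes several $e_j$'s changes several of these potentials simultaneously. Hence ``$2$ per active direction $\times$ $3$ directions $=6$'' does not follow: the six unit displacements you need could, a priori, be packed into fewer than six operations. Your sentence ``I would address both by checking that \dots the couplings introduced by non-pristine sources cannot reduce the total out-and-back displacement below $2$ per active direction'' is precisely the missing lemma, not a proof of it; and nothing you wrote forces a single accumulator, so the phrase ``three input directions other than the accumulator's own'' is unjustified as stated.

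The paper closes this gap by a different route. It splits on how many rows of $\mat{\alpha}$ are multiples of standard basis vectors. If at most two, there are at least five non-trivial rows, so the out-of-place cost is $k\ge 5$ and \cref{lem:plusone} already gives $\ge 6$. It then invokes \cite[Lemma~8]{Bshouty:1995:minwinoadd} --- which your plan does not use at all --- to cap the number of such rows at three. In the remaining ``exactly three'' case the key reduction (the step your sketch lacks) is this: in a hypothetical $5$-operation in-place program, any operation whose result restores a register to its initial state realizes none of the four non-trivial rows of the residual $4{\times}4$ block, so at most one of the five operations can be a restoration, and therefore at most \emph{one} register can ever leave and return to its initial value. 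With a single accumulator all sources stay pristine, and then --- and only then --- your $3\times 2=6$ counting goes through verbatim; that is exactly how the paper finishes. To salvage your plan you would need either this single-accumulator reduction or a genuine additive potential that is provably subadditive across coordinates under non-pristine additions; the latter does not come for free.
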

\begin{proof}
A bilinear matrix multiplication algorithm has to compute
$\mat{\alpha}\vec{a}$, with $\vec{a}$ the entries of the left input of
the matrix multiplication, while $\mat{\beta}$ deals with the right
input.
These $\mat{\alpha}$ and $\mat{\beta}$ matrices cannot contain a
($4$-dimensional) zero row:
otherwise there would exist an algorithm using less than $6$
multiplications, but $7$ is minimal~\cite{Winograd:1971:minseven}.
If $\mat{\alpha}$ or $\mat{\beta}$ contain at least $5$ rows that are not
standard basis vectors, then they require at
least $5$ non-trivial operations to be computed, and therefore at
least $6$ elementary operations with an in-place algorithm,
by~\cref{lem:plusone}.
The matrices also cannot contain more than $3$ multiples of standard
basis vectors, by~\cite[Lemma~8]{Bshouty:1995:minwinoadd}.
There thus remains now only to consider matrices with exactly $3$ rows
that are multiple of standard basis vectors.
Let $\mat{M}$ be the $4{\times}4$ sub-matrix obtained from
$\mat{\alpha}$ (or $\mat{\beta}$) by removing those $3$ standard basis vectors.
By~\cref{lem:nocannorzero}, no column of $\mat{M}$ can be the zero column:
otherwise a $7$-dimensional column of $\mat{\alpha}$ (or
$\mat{\beta}$) would be either a multiple of a standard basis vector,
or the zero vector.
This means that every variable of $\vec{a}$ has to be used at least
once to realize the $4$ operations of $\mat{M}\vec{a}$.
Now suppose that there exists an in-place algorithm realizing
$\mat{M}\vec{a}$ in $5$ elementary operations.
Any operations among these $5$ that, as its results, puts back a
variable into its initial state, does not realize any row of
$\mat{M}\vec{a}$ (because putting back a variable to its initial state
is the trivial identity on this initial variable, and this would be
represented by a $4$-dimensional standard basis vector, which
$M$ do not contain, by construction).
Therefore, at most one among these $5$ operations puts back a variable
of $\vec{a}$ into its initial state (otherwise $\mat{M}\vec{a}$, and
therefore  $\mat{\alpha}\vec{a}$ or $\mat{\beta}\vec{a}$, would be
realizable in strictly less than $4$ operations).
Thus, at most one variable of $\vec{a}$ can be modified during
the algorithm (otherwise the algorithm would not be able to put back
all its input variables into their initial state).

W.l.o.g suppose this only modified variable is $a_1$.
Finally, as all the other $3$ variables must be used in at least one
of the $5$ elementary operations, at least $3$ operations are
of the form $a_1\pe\lambda_i{a_i}$ for $i=2,3,4$ and some constants $\lambda_i$.
After those, to put back $a_1$ into its initial state, each one
of these $3$ independent variables, $a_2$, $a_3$ and $a_4$, must be ``removed''
from $a_1$ at some point of the elementary program.
But, with a total of $5$ operations, there remains only $2$ other possible
elementary operations, each one of those modifying only~$a_1$.
Therefore not all $3$ variables can be removed and thus no in-place
algorithm can use only $5$ operations.
\end{proof}
Finally, there remains to consider the linear combinations of the $7$
multiplications to conclude that~\cref{alg:ipsw} realizes the minimal
number of operations for any in-place algorithm with $7$
multiplications.
\begin{theorem}\label{thm:eighteen}
  At least $25$ additions are required to compute in-place
  any bilinear matrix multiplication algorithm using
  only $7$ multiplications and to restore
  its input matrices to their initial states afterwards.
\end{theorem}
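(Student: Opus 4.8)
The plan is to charge the additions of an arbitrary in-place algorithm (with exactly $7$ multiplications, restoring $\mat{A}$ and $\mat{B}$) to four disjoint pools according to which block each operation overwrites, and to bound each pool separately. Writing the computation as an elementary program on the variables $\{a_{ij}\}$, $\{b_{ij}\}$, $\{c_{ij}\}$ together with its $7$ bilinear products, each computed once, every addition is of exactly one of four kinds: (i) it updates an $\mat{A}$-block from $\mat{A}$-blocks, so the (i)-operations realize and then restore $\mat{\alpha}\vec a$; (ii) it updates a $\mat{B}$-block from $\mat{B}$-blocks, realizing and restoring $\mat{\beta}\vec b$; (iii) it updates a $\mat{C}$-block from $\mat{C}$-blocks, a pure redistribution governed by $\mat{\mu}$; or (iv) it accumulates one freshly formed product into a $\mat{C}$-block, a multiply-accumulate counted as one addition. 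Since (i), (ii) and (iii)$+$(iv) act on pairwise disjoint sets of variables, the total addition count is the sum of the sizes of these pools, and it suffices to bound each.

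First I would dispatch the easy pools. Pool (i) is precisely an in-place realization of $\mat{\alpha}$ with restoration of $\vec a$, and pool (ii) the analogue for $\mat{\beta}$, so by \cref{prop:six} each contains at least $6$ additions. Pool (iv) contains exactly $7$ additions: the algorithm performs only $7$ multiplications, and every column of $\mat{\mu}$ is nonzero—otherwise some product would be unused and $6$ multiplications would suffice, contradicting the minimality of $7$~\cite{Winograd:1971:minseven}—so each of the $7$ products is genuinely used and must enter the $\mat{C}$-space through exactly one operation of type (iv). It remains only to show that the redistribution pool (iii) needs at least $6$ additions, which will yield $6+6+7+6=25$.

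The crux is this lower bound on pool (iii). The redistribution additions, taken on their own, form a linear in-place program on the $4$-dimensional $\mat{C}$-space that routes each injected product to its target outputs with the coefficients prescribed by $\mat{\mu}$; its additive cost is therefore the in-place cost of the linear operator associated with $\mat{\mu}$. By the transposition (duality) principle~\cite{Hopcroft:1973:duality,Kaminski:1988:transpose}, this cost equals that of the in-place realization of $\Transpose{\mat{\mu}}$. Now $\Transpose{\mat{\mu}}\in\F^{7\times4}$ has exactly the structure required by \cref{prop:six}: by \cref{lem:nocannorzero} it has neither a zero column nor a column that is a multiple of a standard basis vector, and by \cite[Lemma~8]{Bshouty:1995:minwinoadd} (applicable to any of the three matrices by duality) at most $3$ of its rows are multiples of standard basis vectors. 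Hence the case analysis of \cref{prop:six}, built on \cref{lem:plusone}, applies verbatim to $\Transpose{\mat{\mu}}$ and forces at least $6$ redistribution additions.

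Summing the four pools gives at least $6+6+7+6=25$ additions, and \cref{alg:ipsw} attains this count, so the bound is tight. I expect the delicate step to be the duality reduction in the third paragraph: one must argue carefully that the redistribution pool really is the in-place realization of $\Transpose{\mat{\mu}}$, and in particular that the ``$+1$'' restoration term of \cref{prop:six} (where an input block is put back) transfers correctly to the accumulation setting, where $\mat{C}$ is a genuine accumulator rather than a restored input. Getting the transposition principle to apply to the full product-to-output routing—rather than only to the linear part—while keeping the pool decomposition disjoint is where the real work lies; the $\mat{\alpha}$ and $\mat{\beta}$ bounds and the count of $7$ injections are then immediate.
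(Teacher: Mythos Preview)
Your decomposition and the paper's proof agree on the $\mat{\alpha}$ and $\mat{\beta}$ pools (each $\ge 6$ by \cref{prop:six}) and both handle the $\mat{\mu}$ side via transposition; your $6+6+7+6$ matches the paper's $6+6+13$. The gap is exactly where you flag it: the assertion that the pool-(iii) cost ``equals that of the in-place realization of $\Transpose{\mat{\mu}}$'' is not a valid application of Tellegen's principle. Transposition does not preserve the addition count---it shifts it by the difference of input and output dimensions---and pool~(iii) by itself is not the realization of any fixed linear map, since the redistribution steps are interleaved with the injections and act on states already containing partial products.

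The paper's fix is to keep pools (iii) and (iv) together. It sets $P=\begin{smatrix}I_4&\mat{\mu}\end{smatrix}\in\F^{4\times11}$ and $\vec\xi=\begin{smatrix}\vec c\\\vec\rho\end{smatrix}$, so that the entire $\mat{C}$-side program realizes $P\vec\xi$ in-place of~$\vec c$. The transposed program then realizes $\Transpose{P}\in\F^{11\times4}$, i.e., computes $\Transpose{\mat{\mu}}\vec{\underline c}$ while restoring $\vec{\underline c}$, and by (the argument of) \cref{prop:six} this needs at least $6$ additions. Tellegen's principle \cite[Theorem~7]{Kaminski:1988:transpose} now yields that $P$ needs at least $6+(11-4)=13$. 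Note that the ``$+7$'' is the dimension shift $11-4$, which equals the number of products precisely because $P$ has $4+7$ columns; this is the structural reason your $7+6$ split is numerically correct, but the rigorous route to the bound goes through $P$ and the transposition shift rather than through a direct equality between pool~(iii) and the cost of $\Transpose{\mat{\mu}}$.
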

\begin{proof} \Cref{prop:six} shows that at least $6$ operations are
  required to realize $\alpha$ (or $\beta$).
  For $\mu$, we in fact compute $\vec{c}\pe\mu\vec{\rho}$,
  so we need to consider the matrix
  $P=\begin{smatrix}I_4&\mu\end{smatrix}\in\F^{4{\times}11}$
  and
  the vector
  $\vec{\xi}=\begin{smatrix}\vec{c}\\\vec{\rho}\end{smatrix}$.
  Consider now an elementary program that realizes $P\vec{\xi}$,
  in-place of $\vec{c}$ only. This implies for instance that if
  $\vec{\rho}$ is zero, $\vec{c}$ should ultimately be put back to its initial
  state.
  Finally, consider the transposed program
  $\Transpose{P}\vec{\underline{c}}$: it must be in-place of
  $\vec{\underline{c}}$, while putting back $\vec{\underline{c}}$ to
  its initial state afterwards.
  By \cref{prop:six}, $\Transpose{\mu}$, thus
  $\Transpose{P}\in\F^{11{\times}4}$, requires at least $6$ elementary
  operations to be performed.
  By Tellegen's transposition principle, see
  also~\cite[Theorem~7]{Kaminski:1988:transpose}, computing the
  transposed program requires at least $6+(11-4)=13$ operations.
  This gives a total of at least $6+6+13=25$ additions.
\end{proof}

\Cref{thm:eighteen} thus shows that our~\cref{alg:ipsw} with $18$
elementary additions and $7$ recursive calls (thus $7$ more, and a
total of $18+7=25$ additions) is an optimal in-place
bilinear matrix multiplication algorithm using
only $7$ multiplications.

To go beyond our minimality result for operations, one could
try an alternate basis of~\cite{Karstadt:JACM:2020:MMfaster}.
But an argument similar to that of~\cref{prop:six} shows
that alternate basis does not help for the in-place case.

Any bilinear algorithm for matrix multiplication (see, e.g.,
\url{https://fmm.univ-lille.fr/}) can be dealt with similarly.
Further, even the accumulating version of the non-bilinear algorithm
of~\cite{Dumas:2023:adjoint} can
benefit from our techniques of in-place
accumulation.

%
%
\subsection{In-place Square \& Rank-k Update}\label{app:aat}
Thanks to~\cref{alg:ipsw} and with some care on transposes, the same
technique can be adapted to, e.g., \cite[Alg.~12]{Dumas:2023:adjoint},
which performs the multiplication of a matrix by its transpose.
With an accumulation, this is a classical \emph{Symmetric Rank-k
  Update} (or SYRK): $C\leftarrow{\alpha{A}\Transpose{A}+\beta{C}}$.

Following the notations of the latter algorithm, which is not a
bilinear algorithm on its single input matrix, the in-place
accumulating version is shown in~\cref{alg:ipaat}, for $\alpha=\beta=1$,
using any (fast to apply) skew-unitary $Y\in\F^{n{\times}n}$.
It has been obtained automatically by the method
of~\cref{thm:general}, and it thus preserves the need of only $5$
multiplications $P_1$ to ${P_5}$. It has then been scheduled to
reduce the number of extra operations.

\Cref{alg:ipaat} requires
$3$ recursive calls,
$2$ multiplications of two independent half matrices,
$4$ multiplications by a skew-unitary half matrix,
$8$ additions (of half inputs),
$12$ semi-additions (of half triangular outputs).
Provided that the multiplication by the skew-unitary matrix can be
performed in-place in negligible time,
this gives a dominant term of the complexity bound
for~\cref{alg:ipaat} of a fraction $\frac{2}{2^\omega-3}$ of the
cost of the full in-place algorithm.
This is a factor $\frac{1}{2}$, when~\cref{alg:ipsw} is used for the
two block multiplications of independent matrices ($P4$ and $P5$).

\begin{algorithm}[ht]
\caption{In-place accumulating multiplication by its transpose.}\label{alg:ipaat}
\begin{algorithmic}
\REQUIRE $A=\begin{smatrix} a_{11} & a_{12} \\  a_{21}
  &a_{22}\end{smatrix}\in\F^{m{\times}2n}$;
symmetric $C=\begin{smatrix} c_{11} & \Transpose{c_{21}} \\  c_{21}
  &c_{22}\end{smatrix}\in\F^{m{\times}m}$.
\ENSURE $\Low{C}\pe\Low{A\cdot\Transpose{A}}$. \hfill\COMMENT{update
  bottom left triangle}
\end{algorithmic}
\begin{Verbatim}[commandchars=\\\{\},codes={\catcode`$=3\catcode`^=7\catcode`_=8}]
Low(C$_{22}$)$:=\,$Low(C$_{22}$)${-}$Low(C$_{11}$);   Low(C$_{21}$)$:=\,$Low(C$_{21}$)${-}$Low(C$_{11}$);
 Up(C$_{21}$)$:=\,$ Up(C$_{21}$)${-}$Low(C$_{11}$)$^{\intercal}$;
\colorbox{cyan!10}{Low(C$_{11}$)$:=\,$Low(C$_{11}$)${+}$Low$(A_{11}${*}A$_{11}^{\intercal}$);}       # $P_1$ Rec.
 Up(C$_{21}$)$:=\,$ Up(C$_{21}$)${+}$Low(C$_{11}$)$^{\intercal}$;
Low(C$_{21}$)$:=\,$Low(C$_{21}$)${+}$Low(C$_{11}$);   Low(C$_{22}$)$:=\,$Low(C$_{22}$)${+}$Low(C$_{11}$);
\colorbox{cyan!10}{Low(C$_{11}$)$:=\,$Low(C$_{11}$)${+}$Low$(A_{12}{*}$A$_{12}^{\intercal}$);}       # $P_2$ Rec.
A$_{11}:=\,$A$_{11}{*}$Y;   A$_{21}:=\,$A$_{21}{*}$Y; A$_{11}:=\,$A$_{11}{-}$A$_{21}$; A$_{21}:=\,$A$_{21}{-}$A$_{22}$;
Low(C$_{22}$)$:=\,$Low(C$_{22}$)${-}$Low(C$_{21}$);   Low(C$_{22}$)$:=\,$Low(C$_{22}$)${-}$Low(C$_{21}^{\intercal}$);
\colorbox{cyan!10}{C$_{21}:=\,$C$_{21}{+}$A$_{11}{*}$A$_{21}^{\intercal}$;}                      # $P_4$ ({e.g.}, \cref{alg:ipsw})
Low(C$_{22}$)$:=\,$Low(C$_{22}$)${+}$Low(C$_{21}^{\intercal}$);
A$_{21}:=\,$A$_{21}{-}$A$_{11}$;
 Up(C$_{21}$)$:=\,$ Up(C$_{21}$)${-}$Low(C$_{21}$)$^{\intercal}$;
\colorbox{cyan!10}{Low(C$_{21}$)$:=\,$Low(C$_{21}$)${+}$Low$(A_{21}{*}$A$_{21}^{\intercal}$);}       # $P_5$ Rec.
 Up(C$_{21}$)$:=\,$ Up(C$_{21}$)${+}$Low(C$_{21}$)$^{\intercal}$;   Low(C$_{22}$)$:=\,$Low(C$_{22}$)${+}$Low(C$_{21}$);
A$_{21}:=\,$A$_{21}{+}$A$_{12}$;
\colorbox{cyan!10}{C$_{21}:=\,$C$_{21}{+}$A$_{22}{*}$A$_{21}^{\intercal}$;}                      # $P_3$ ({e.g.}, \cref{alg:ipsw})
A$_{21}:=\,$A$_{21}{-}$A$_{12}$; A$_{21}:=\,$A$_{21}{+}$A$_{11}$; A$_{21}:=\,$A$_{21}{+}$A$_{22}$; A$_{11}:=\,$A$_{11}{+}$A$_{21}$;
A$_{21}:=\,$A$_{21}{*}$Y$^{-1}$; A$_{11}:=\,$A$_{11}{*}$Y$^{-1}$;
\end{Verbatim}
\end{algorithm}

Now, the skew-unitary matrices used in~\cite{Dumas:2023:adjoint}, are
either a multiple of the identify matrix, or the Kronecker product of
$\begin{smatrix}a&b\\-b&a\end{smatrix}$ by the identity matrix,
for $a^2+b^2=-1$ and $a\neq{0}$.
The former is easily performed in-place in time \bigO{n^2}.
For the latter, it is sufficient to use~\cref{eq:twobytwomul}:
the multiplication $\begin{smatrix}a&b\\-b&a\end{smatrix}\vec{u}$
can be realized in place by the algorithm:
$u_1\fe{a}$; $u_1\pe{b{\cdot}u_2}$; $u_2\fe(a+b^2a^{-1})$;
$u_2\pe{\left(-ba^{-1}\right){\cdot}u_1}$.
The same technique can be used on the symmetric algorithm for the
square of matrices given in~\cite{Bodrato:ISSAC2010}. The resulting
in-place algorithm is given in~\cref{alg:square}.
\begin{algorithm}[htbp]
\caption{In-place accumulating S.-W. matrix-square.}\label{alg:square}
\begin{algorithmic}
\REQUIRE
$A=\begin{smatrix} a_{11} & a_{12} \\  a_{21} &a_{22}\end{smatrix}$,
$C=\begin{smatrix} c_{11} & c_{12} \\  c_{21} &c_{22}\end{smatrix}$.
\ENSURE $C\pe{A^2}$.
\end{algorithmic}
\begin{Verbatim}[commandchars=\\\{\},codes={\catcode`$=3\catcode`^=7\catcode`_=8}]
A$_{22}$ := A$_{22}$ - A$_{21}$; C$_{12}$ := C$_{12}$ + C$_{22}$;
\colorbox{cyan!10}{C$_{22}$ := C$_{22}$ + A$_{22}$ * A$_{22}$;}
A$_{22}$ := A$_{22}$ + A$_{12}$; A$_{22}$ := A$_{22}$ - A$_{11}$;
\colorbox{cyan!10}{C$_{12}$ := C$_{12}$ - A$_{22}$ * A$_{12}$;}
\colorbox{cyan!10}{C$_{21}$ := C$_{21}$ - A$_{21}$ * A$_{22}$;}
C$_{21}$ := C$_{21}$ - C$_{22}$; A$_{22}$ := A$_{22}$ + A$_{11}$;
\colorbox{cyan!10}{C$_{22}$ := C$_{22}$ - A$_{22}$ * A$_{22}$;}
C$_{11}$ := C$_{11}$ + C$_{22}$;
\colorbox{cyan!10}{C$_{22}$ := C$_{22}$ - A$_{12}$ * A$_{21}$;}
A$_{22}$ := A$_{22}$ + A$_{21}$; C$_{12}$ := C$_{12}$ - C$_{22}$; C$_{11}$ := C$_{11}$ - C$_{22}$;
\colorbox{cyan!10}{C$_{22}$ := C$_{22}$ + A$_{22}$ * A$_{22}$;}
A$_{22}$ := A$_{22}$ - A$_{12}$; C$_{21}$ := C$_{21}$ + C$_{22}$;
\colorbox{cyan!10}{C$_{11}$ := C$_{11}$ + A$_{11}$ * A$_{11}$;}
\end{Verbatim}
\end{algorithm}

\section{In-place polynomial multiplication with
  accumulation}\label{sec:inpaccpol}
\Cref{alg:bilin} can also be used for polynomial multiplication.
An additional difficulty is that this does not completely fit the setting,
as multiplication of two size-$n$ inputs will in general span a (double)
size-$2n$ output.
This is not an issue until one has to distribute separately the two
halves of this $2n$ values (or more generally to different parts of
different outputs).
In the following we show that this can anyway always be done
for polynomial multiplications.

\subsection{In-place accumulating Karatsuba}\label{ssec:kara}
For instance, we immediately obtain an in-place Karatsuba
polynomial multiplication since it writes as
in~\cref{eq:kara}, from which we can extract the associated $\mu$,
$\alpha$, $\beta$ matrices shown in~\cref{eq:bilinkara}.
\begin{gather}
\begin{aligned}
(Ya_1 + a_0)(Yb_1 +b_0)&=a_0b_0+Y^2(a_1b_1)\\
&+Y(a_0b_0+a_1b_1-(a_0-a_1)(b_0-b_1))
\end{aligned}\label{eq:kara}\\
\mu=\begin{smatrix}
1&0&0\\
1&1&-1\\
0&1&0
\end{smatrix}\quad\quad
\alpha=\begin{smatrix}
1&0\\
0&1\\
1&-1
\end{smatrix}\quad\quad
\beta=\begin{smatrix}
1&0\\
0&1\\
1&-1
\end{smatrix}\label{eq:bilinkara}
\end{gather}

Then, with $Y=X^\delta$ and $a_i$, $b_i$, $c_i$ polynomials in $X$ (and $a_0$,
$b_0$, $c_0$ of degree less than $t$), this is
detailed, with accumulation, in~\cref{eq:karasplit}:
\begin{equation}\label{eq:karasplit}
\fbox{\scalebox{.975}[0.975]{\ensuremath{\begin{aligned}
A(Y)& = Ya_1 + a_0;\quad
B(Y) = Yb_1 + b_0;\\
C(Y)& = Y^3c_{11} + Y^2c_{10} + Yc_{01} + c_{00};\\
m_0& = a_0\cdot{b_0} = m_{01}Y+m_{00};\quad
m_1 = a_1\cdot{b_1} = m_{11}Y+m_{10};\\
m_2& = (a_0 - a_1)\cdot(b_0 - b_1)= m_{21}Y+m_{20};\\
t_{00} &= c_{00}+m_{00};\quad
t_{01}  = c_{01}+m_{01}+m_{00}+m_{10}-m_{20};\\
t_{10} &= c_{10}+m_{10}+m_{01}+m_{11}-m_{21};\quad
t_{11} = c_{11}+m_{11};\\
\text{\algorithmicthen}&\quad C+AB ={Y^3t_{11}+Y^2t_{10}+Yt_{01}+t_{00}}
\end{aligned}}}}
\end{equation}
To deal with the distributions of each half of the
products of~\cref{eq:karasplit},
each coefficient in $\mu$ in~\cref{eq:bilinkara} can be expanded into
$2{\times}2$ identity blocks, and the middle rows combined two by two,
as each tensor product actually spans
two sub-parts of the result; we obtain~\cref{eq:bilinkarasplit}:
\begin{equation}\label{eq:bilinkarasplit}
\scalebox{.95}[0.95]{%
\ensuremath{\mu^{(2)}{=}
\begin{smatrix} I_2&0_2&0_2\\0_2&0_2&0_2\end{smatrix}
{+}
\begin{smatrix} 0&0&0\\I_2&I_2&-I_2\\ 0&0&0\end{smatrix}
{+}
\begin{smatrix} 0_2&0_2&0_2\\0_2&I_2&0_2\end{smatrix}
{=}
\begin{smatrix}
1&0&0&0&0&0\\
1&1&1&0&-1&0\\
0&1&1&1&0&-1\\
0&0&0&1&0&0
\end{smatrix}}.}
\end{equation}
Finally, \cref{eq:karasplit} then translates into an in-place algorithm
thanks to~\cref{alg:bilin,eq:bilinkara,eq:bilinkarasplit}.
The first point is that
products double the degree: This corresponds to a constraint that the
two blocks have to remain together when distributed.
In other words, this means that the matrix $\mu^{(2)}$ needs to be
considered two consecutive columns by two consecutive columns.
This is always possible if the two columns are of full rank $2$.
Indeed, consider a $2\times{2}$ invertible sub-matrix
$M=\begin{smatrix} v & w\\x & y\end{smatrix}$ of these two columns.
Then computing
$\begin{smatrix}c_i\\c_j\end{smatrix}\pe{}M\begin{smatrix}\rho_0\\\rho_1\end{smatrix}$
is equivalent to computing a $2\times{2}$ version of~\cref{eq:basemul}:
\begin{equation}\label{eq:twobytwo}
\left\lbrace\begin{smatrix}c_i\\c_j\end{smatrix} \fe M^{-1}; \quad
\begin{smatrix}c_i\\c_j\end{smatrix} \pe \begin{smatrix} \rho_0\\\rho_1\end{smatrix}; \quad
\begin{smatrix}c_i\\c_j\end{smatrix} \fe M\right\rbrace.
\end{equation}
The other rows of these two columns can be dealt with as before by
pre- and post-multiplying/dividing by a constant and pre- and
post-adding/subtracting the adequate $c_i$ and $c_j$.
Now to apply a matrix $M=\begin{smatrix}a&b\\c&d\end{smatrix}$ to a
vector of results $\begin{smatrix}\vec{u}\\\vec{v}\end{smatrix}$,
it is sufficient that one of its coefficients is invertible.
W.l.o.g suppose that its upper left element, $a$, is invertible.
Thus,
$\begin{smatrix}a&b\\c&d\end{smatrix}=\begin{smatrix}1&0\\ca^{-1}&1\end{smatrix}\begin{smatrix}a&b\\0&d-ca^{-1}\end{smatrix}$.
Then the in-place evaluation of~\cref{eq:twobytwomul} performs this
application, using the two (known in advance) constants $x=ca^{-1}$ and $y=d-ca^{-1}b$:
\begin{equation}\label{eq:twobytwomul}
  \fbox{\ensuremath{\left.\begin{aligned}
	  \vec{u}&\fe{a}\\[-5pt]
	  \vec{u}&\pe{b\cdot\vec{v}}\\[-5pt]
	  \vec{v}&\fe{y}\\[-5pt]
	  \vec{v}&\pe{x\cdot\vec{u}}
   \end{aligned}\right\rbrace
   \quad
   \begin{array}{l}
	\text{computes in-place:}\\
	\begin{smatrix}\vec{u}\\\vec{v}\end{smatrix}
	\gets\begin{smatrix}a&b\\c&d\end{smatrix}
	\odot
	\begin{smatrix}\vec{u}\\\vec{v}\end{smatrix}
	=\begin{smatrix}a\vec{u}+b\vec{v}\\c\vec{u}+d\vec{v}\end{smatrix}\\
	\text{for}~x=ca^{-1}~\text{and}~y=d-xb
      \end{array}
    }}
\end{equation}

\begin{remark}\label{rq:zerotopleft}
  In practice for $2\times{2}$ blocks, if $a$ is not
  invertible, permuting the rows is sufficient since $c$ has to be
  invertible for the matrix to be invertible: for
  $J=\begin{smatrix}0&1\\1&0\end{smatrix}$,
  if $\tilde{M}=\begin{smatrix}c&d\\0&b\end{smatrix}=J{\cdot}{M}$, then
  $M=J{\cdot}\tilde{M}$ and $M^{-1}=\tilde{M}^{-1}{\cdot}{J}$ so
  that~\cref{eq:twobytwo} just becomes:\\
\(\begin{smatrix} c_i\\c_j\end{smatrix}\fe J;
  \begin{smatrix} c_i\\c_j\end{smatrix}\fe \tilde{M}^{-1};
  \begin{smatrix} c_i\\c_j\end{smatrix}\pe\begin{smatrix}\rho_0\\\rho_1\end{smatrix};
  \begin{smatrix} c_i\\c_j\end{smatrix}\fe \tilde{M};
  \begin{smatrix} c_i\\c_j\end{smatrix}\fe J.\)
\end{remark}

We now have the tools for in-place polynomial algorithms.
We start, in~\cref{alg:doublebilin}, with a version
of~\cref{alg:bilin} for which the multiplications are accumulated into
two consecutive blocks (denoted \MUL-2D).
\begin{algorithm}[htbp]
  \caption{In-place bilinear $2$ by $2$ formula.}\label{alg:doublebilin}
  \begin{algorithmic}[1]\small
    \REQUIRE $\vec{a}\in\F^m$, $\vec{b}\in\F^n$, $\vec{c}\in\F^s$;
    $\mat{\alpha}\in\F^{t{\times}m}$, $\mat{\beta}\in\F^{t{\times}n}$,
    $\mat{\mu}\in\F^{s{\times}(2t)}=\begin{smatrix}M_1&\cdots&M_t\end{smatrix}$,
      with no zero-rows in $\alpha$, $\beta$, $\mu$,
      s.t. $(a_i\cdot{b_j})$ fits two result variables $c_k$, $c_l$
      and s.t. $M_i\in\F^{s{\times}2}$ is of full-rank $2$ for $i=1..t$.
    \READONLY$\mat{\alpha},\mat{\beta},\mat{\mu}$.
    \ENSURE $\vec{c}\pe\mat{\mu}\vec{m}$, for
    $\vec{m}=(\mat{\alpha}\vec{a})\odot(\mat{\beta}\vec{b})$
    \FOR{$\ell=1$ \To $t$}
    \STATE Let $i$ s.t. $\alpha_{\ell,i}\neq{0}$;
    $a_i\fe\alpha_{\ell,i}$;
    \ForDoEnd[lin:doublealpha]{$\lambda=1$ \To $m$, $\lambda\neq{i}$,
      $\alpha_{\ell,\lambda}\neq{0}$
    }{$a_i\pe\alpha_{\ell,\lambda}a_\lambda$}
    \STATE Let $j$ s.t. $\beta_{\ell,j}\neq{0}$;
    $b_j\fe\beta_{\ell,j}$;
    \ForDoEnd[lin:doublebeta]{$\lambda=1$ \To $n$, $\lambda\neq{j}$,
      $\beta_{\ell,\lambda}\neq{0}$}{$b_j\pe\beta_{\ell,\lambda}b_\lambda$}
    \STATE Let $k$, $f$
    s.t. $M=\begin{smatrix}\mu_{k,2\ell}&\mu_{k,2\ell+1}\\\mu_{f,2\ell}&\mu_{f,2\ell+1}\end{smatrix}$
    is invertible;
    \STATE\label{lin:invmul}$\begin{smatrix} c_k\\c_f \end{smatrix} \leftarrow M^{-1}
    \begin{smatrix} c_k\\c_f \end{smatrix}$
    \hfill\COMMENT{via~\cref{eq:twobytwomul,rq:zerotopleft}}
    \ForDoEnd[lin:divsube]{$\lambda=1$ \To $s$,
      $\lambda\not\in\{f,k\}$,
      $\mu_{\lambda,2\ell}\neq{0}$}{$c_\lambda\me\mu_{\lambda,2\ell}{c_k}$}
    \ForDoEnd[lin:divsubo]{$\lambda=1$ \To $s$,
      $\lambda\not\in\{f,k\}$,
      $\mu_{\lambda,2\ell+1}\neq{0}$}{$c_\lambda\me\mu_{\lambda,2\ell+1}{c_f}$}
    \STATE\label{lin:doubleproduct}$\begin{smatrix} c_k\\c_f
    \end{smatrix}\pe{a_i\cdot{b_j}}$\hfill\COMMENT{This is the
      accumulation of the product $\begin{smatrix} m_k\\m_f\end{smatrix}$}
    \ForDoEnd{$\lambda=1$ \To $s$, $\lambda\not\in\{f,k\}$,
      $\mu_{\lambda,2\ell+1}\neq{0}$}{$c_\lambda\pe\mu_{\lambda,2\ell+1}{c_f}$}
    \ForDoEnd{$\lambda=1$ \To $s$,$\lambda\not\in\{f,k\}$,
      $\mu_{\lambda,2\ell}\neq{0}$}{$c_\lambda\pe\mu_{\lambda,2\ell}{c_k}$}
    \STATE$\begin{smatrix} c_k\\c_f \end{smatrix} \leftarrow M\begin{smatrix} c_k\\c_f \end{smatrix}$
    \hfill\COMMENT{via~\cref{eq:twobytwomul,rq:zerotopleft}, undo~\ref{lin:invmul}}
    \ForDoEnd{$\lambda=1$ \To $n$, $\lambda\neq{j}$,
      $\beta_{\ell,\lambda}\neq{0}$}{$b_j\me\beta_{\ell,\lambda}b_\lambda$};~$b_j\de\beta_{\ell,j}$;
    \ForDoEnd{$\lambda=1$ \To $m$, $\lambda\neq{i}$,
      $\alpha_{\ell,\lambda}\neq{0}$}{$a_i\me\alpha_{\ell,\lambda}a_\lambda$};~$a_i\de\alpha_{\ell,i}$;
    \ENDFOR
    \RETURN $\vec{c}$.
  \end{algorithmic}
\end{algorithm}
A C++ implementation of~\cref{alg:bilin,alg:doublebilin}
(\href{https://github.com/jgdumas/plinopt/blob/main/src/trilplacer.cpp}{\texttt{trilplacer}})
is available in the
\href{https://github.com/jgdumas/plinopt}{\textsc{PLinOpt}}
 library:~\url{https://github.com/jgdumas/plinopt}.
\begin{theorem}\label{thm:doublebilin}
\Cref{alg:doublebilin} is correct, in-place, and requires
$t$ \MUL-2D,
$2(\#\alpha+\#\beta+\#\mu-t)$ \ADD and
$2(\sharp\alpha+\sharp\beta+\sharp\mu+2t)$ \SCA operations.
\end{theorem}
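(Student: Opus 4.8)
The plan is to follow the three-part structure of the proof of \Cref{thm:bilin}: show the algorithm is in-place, show it is correct (both that it restores its read-only inputs and that it computes the claimed accumulation), and then count the operations. The in-place claim is immediate: every statement is one of the in-place primitives $\pe,\me,\fe,\de$, or a $2\times2$ application realized in-place through \eqref{eq:twobytwomul} (and \Cref{rq:zerotopleft} when the top-left pivot vanishes), so no storage beyond the inputs, the outputs and $\bigO{1}$ registers is ever used.

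For correctness, the handling of $\vec a$ and $\vec b$ is verbatim that of \Cref{alg:bilin}: the forward combinations on \cref{lin:doublealpha,lin:doublebeta} and their reversals telescope, so $\vec a$ and $\vec b$ are restored and, at \cref{lin:doubleproduct}, one has $a_i=(\mat\alpha\vec a)_\ell$ and $b_j=(\mat\beta\vec b)_\ell$, whence the single product equals $m_\ell$, whose low and high blocks I write $(p_0,p_1)$. The genuinely new point is the in-place distribution of that double block through two columns of $\mat\mu$ at once. First I would record that \eqref{eq:twobytwomul} (with \Cref{rq:zerotopleft}) realizes in place of $\begin{smatrix}c_k\\c_f\end{smatrix}$ the exact maps $M$ and $M^{-1}$ for the invertible pivot $M$; this pivot exists because the pair of columns $M_\ell\in\F^{s\times2}$ has full rank $2$ by hypothesis. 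Then I would trace the net effect on a single index $\ell$: the pre-application of $M^{-1}$ (\cref{lin:invmul}), the off-pivot pre-subtractions (\cref{lin:divsube,lin:divsubo}) which use the transformed values, the accumulation $\begin{smatrix}c_k\\c_f\end{smatrix}\pe\begin{smatrix}p_0\\p_1\end{smatrix}$, the symmetric re-additions, and the closing application of $M$.

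The heart of this computation is a $2\times2$ analogue of the distribution identity \eqref{eq:basedist}. For an off-pivot row, the pre-subtract of $\mu_{\lambda,2\ell}c_k+\mu_{\lambda,2\ell+1}c_f$ and the post-add of the same expression (now evaluated at the post-product pivots) telescope to leave exactly $c_\lambda\pe\mu_{\lambda,2\ell}p_0+\mu_{\lambda,2\ell+1}p_1$; the intermediate values $M^{-1}\begin{smatrix}c_k\\c_f\end{smatrix}$ cancel. For the two pivot rows, the closing $M$ turns $M^{-1}\begin{smatrix}c_k\\c_f\end{smatrix}+\begin{smatrix}p_0\\p_1\end{smatrix}$ into $\begin{smatrix}c_k\\c_f\end{smatrix}+M\begin{smatrix}p_0\\p_1\end{smatrix}$, using $M\,M^{-1}=I$, so rows $k,f$ also receive the intended combinations $\mu_{\cdot,2\ell}p_0+\mu_{\cdot,2\ell+1}p_1$. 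Every row is thus augmented by the correct combination of the two product blocks; summing over $\ell$ gives $\vec c\pe\mat\mu\vec m$, and the reversals on \cref{lin:doublealpha,lin:doublebeta} restore the inputs.

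For the count I would aggregate contributions component by component. The work on $\vec a$ and $\vec b$ is identical to \Cref{alg:bilin}, contributing $2(\#\alpha-t)+2(\#\beta-t)$ \ADD and $2(\sharp\alpha+\sharp\beta)$ \SCA. The off-pivot manipulation of $\mat\mu$ is again a forward/backward pair, so two \ADD per off-pivot nonzero and two \SCA per off-pivot entry outside $\{0,1,-1\}$, exactly as in the single-column case. The remaining cost comes from the two calls to \eqref{eq:twobytwomul} per product (for $M^{-1}$ and $M$), one \MUL-2D per product, and the double accumulation on \cref{lin:doubleproduct}; collecting these yields the extra $+2t$ in the \ADD tally and the extra $+4t$ in the \SCA tally, so that the totals collapse to $2(\#\alpha+\#\beta+\#\mu-t)$ \ADD and $2(\sharp\alpha+\sharp\beta+\sharp\mu+2t)$ \SCA. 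I expect this reconciliation of the $2\times2$ pivot applications against the $\#\mu,\sharp\mu$ counts to be the only delicate step: the correctness telescoping is conceptually the crux, but once \eqref{eq:twobytwomul} is granted it reduces to the same verification pattern as in \Cref{thm:bilin}.
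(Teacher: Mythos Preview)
Your proposal is correct and follows essentially the same approach as the paper's own proof: reduce in-place-ness and correctness to the pattern of \Cref{thm:bilin} via \eqref{eq:twobytwo}, \eqref{eq:twobytwomul} and \Cref{rq:zerotopleft}, then count by charging $4$~\SCA and $2$~\ADD to each of the $2t$ applications of \eqref{eq:twobytwomul} and handling the remaining $\alpha,\beta,\mu$ contributions as in \Cref{alg:bilin}. If anything, you spell out the $2{\times}2$ distribution identity and the telescoping on the off-pivot rows more explicitly than the paper does, but the substance is identical.
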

\begin{proof}
  Thanks to~\cref{eq:twobytwo,eq:twobytwomul,rq:zerotopleft},
  correctness is similar to that of~\cref{alg:bilin}
  in~\cref{thm:bilin}.
Then, \cref{eq:twobytwomul} requires $4$ \SCA
and $2$ \ADD operations and is called $2t$ times.
The rest is similar to~\cref{alg:bilin} and amounts to
$2t+2(\#\alpha-t+\#\beta-t+\#\mu-2t)+(2t)2$ \ADD and
$2(\sharp\alpha+\sharp\beta+\sharp\mu-2t)+(2t)4$ \SCA operations.
\end{proof}

There remains to use a double expansion of the output
$\mu\in\F^{s{\times}t}$ to simulate the double size of the
intermediate products (\MUL-2D), producing
$\mu^{(2)}\in\F^{s{\times}(2t)}$, as in~\cref{eq:bilinkarasplit}, that
is used as an input in~\cref{alg:doublebilin}.
This double expansion matrix is obtained by the following duplication:
$\mu^{(2)}(i,2j)=\mu(i,j)$ and $\mu^{(2)}(i+1,2j+1)=\mu(i,j)$ for
$i=1..s$ and $j=1..t$. We prove, in \cref{lem:fullrank},
that in fact any such double expansion of a representative matrix
is suitable for the in-place computation of~\cref{alg:doublebilin}.

\begin{lemma}\label{lem:fullrank}
  If $\mu$ does not contain any zero column, then each pair of columns
  of $\mu^{(2)}$, resulting from the expansion of a single column in $\mu$,
  contains an invertible lower triangular $2{\times}2$ sub-matrix.
\end{lemma}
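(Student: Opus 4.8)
The plan is to expose the one-row shift created by the expansion and then read off the required sub-matrix directly. First I would fix the column of $\mu$ under consideration, say its $j$-th column $v=(v_1,\dots,v_s)^\intercal$, which is nonzero by hypothesis. Its two images in $\mu^{(2)}$ are an unshifted ``low'' column $L$, with $L_i=v_i$ for $1\le i\le s$ and $L_{s+1}=0$, and a ``high'' column $H$ obtained by shifting $v$ down by one row, so that $H_1=0$ and $H_i=v_{i-1}$ for $2\le i\le s+1$; this is exactly the duplication $\mu^{(2)}(i,2j)=\mu(i,j)$ and $\mu^{(2)}(i{+}1,2j{+}1)=\mu(i,j)$. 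The $2\times 2$ sub-matrix formed from two rows $p<q$ of this ordered pair of columns is $\begin{smatrix} L_p & H_p\\ L_q & H_q\end{smatrix}$; it is lower triangular precisely when its upper-right entry $H_p$ vanishes, and it is then invertible precisely when the two diagonal entries $L_p$ and $H_q$ are both nonzero.

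The key step is the choice of the two rows. Let $r=\min\{i:v_i\neq 0\}$ and $R=\max\{i:v_i\neq 0\}$; both exist since $v\neq 0$, and $1\le r\le R\le s$. I would take $p=r$ and $q=R+1$, which are legitimate indices of $\mu^{(2)}$ since $q=R+1\le s+1$, and which satisfy $p<q$. For these rows $L_p=v_r\neq 0$; moreover $H_p=v_{r-1}=0$, because $r$ is the first nonzero index of $v$ (with the convention $v_0=0$ covering the case $r=1$, where $H_1=0$ in any case); and $H_q=v_R\neq 0$. Hence the sub-matrix is $\begin{smatrix} v_r & 0\\ \ast & v_R\end{smatrix}$ with $v_r,v_R\neq 0$, i.e. invertible and lower triangular, as claimed. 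In fact the remaining entry is $L_{R+1}=0$ as well, since $R$ is the last nonzero index (so $v_{R+1}=0$) and $L_{s+1}=0$ by construction, so the sub-matrix is even diagonal.

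There is no genuine obstacle once the shift is written out explicitly: the statement reduces to picking the first and last nonzero entries of $v$. The only points deserving care are the two boundary effects of the degree doubling, namely the extra zero row appended to $L$ and the extra zero row prepended to $H$, together with the degenerate case $r=R$ (a single nonzero entry), in which $p$ and $q$ are consecutive and the sub-matrix is $\mathrm{diag}(v_r,v_r)$. I would conclude by noting that this invertible lower-triangular block is exactly the matrix $M$ needed at~\cref{lin:invmul} of~\cref{alg:doublebilin}, so that the two-by-two scaling of~\cref{eq:twobytwomul} applies with an invertible top-left coefficient and no appeal to the row swap of~\cref{rq:zerotopleft} is required.
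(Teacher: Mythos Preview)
Your proof is correct and follows essentially the same idea as the paper: exploit the one-row shift in the expansion and anchor the sub-matrix at the topmost nonzero entry $v_r$, which guarantees a zero in the upper-right corner. The only difference is the choice of the second row: the paper takes the two \emph{consecutive} rows $r$ and $r+1$, obtaining $\begin{smatrix} v_r & 0 \\ v_{r+1} & v_r\end{smatrix}$, whereas you take rows $r$ and $R+1$ and get the diagonal matrix $\begin{smatrix} v_r & 0 \\ 0 & v_R\end{smatrix}$; both are invertible lower-triangular, and your observation that the block is in fact diagonal is a small bonus for the application in~\cref{alg:doublebilin}.
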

\begin{proof}
The top most non-zero element of a column
is expanded as a $2{\times}2$ identity matrix whose second row is
merged with the first row of the next identity matrix: %
$\begin{smatrix} a \\ b\end{smatrix}$ is expanded to
$\begin{smatrix} a &0 \\ b & a\\ * & b\end{smatrix}$.
\end{proof}

For instance with $m_{00}+Ym_{01}=a_0b_0=\rho_0+Y\rho_1$,
consider the upper left $2\times{2}$ block of $\mu^{(2)}$
in~\cref{eq:bilinkarasplit}, that is
$M=\begin{smatrix} 1&0\\1&1\end{smatrix}$, whose inverse is
$M^{-1}=\begin{smatrix} 1&0\\-1&1\end{smatrix}$.
One has first to precompute
$M^{-1}\begin{smatrix}c_{00}\\c_{01}\end{smatrix}$, that is nothing to
$c_{00}$ and $c_{01}\me{}c_{00}$ for the second coefficient.
Then, afterwards, the third row, for $c_{10}$, will just be $-m_{01}$:
for this just pre-subtract $c_{10}\me{}c_{01}$, and post-add
$c_{10}\pe{}c_{01}$ after the product actual computation.
This example is in lines \ref{lin:m0BEG} and \ref{lin:m0END}
of~\cref{alg:accinplmulkara} thereafter.
To complete~\cref{eq:karasplit}, the computation of $m_1$ is dealt
with in the same manner, while that
of $m_2$ is direct in the results.
Note that as both $t_{01}$ and $t_{10}$ receive together
$m_{01}+m_{10}$, some pre- and post-additions are simplified out
in~\cref{alg:accinplmulkara}.
The second point is to deal with unbalanced dimensions and degrees
for $Y=X^\delta$ and recursive calls. First split the largest
polynomial into two parts, so that two sub-products are performed: a
large balanced one, and, recursively, a smaller unbalanced one.
Second, for the balanced case, the idea is to ensure that three out of
four parts of the result, $t_{00}$, $t_{01}$ and $t_{10}$, have the
same size and that the last one $t_{11}$ is smaller. This ensures that
all accumulations can be performed in-place.
Details can be found in~\cref{alg:accinplmulkara}.

\begin{algorithm}[ht]
\caption{In-place Karatsuba polynomial multiplication with
  accumulation}\label{alg:accinplmulkara}
\begin{minipage}{\columnwidth}
\begin{algorithmic}[1]
\REQUIRE $A$, $B$, $C$ polynomials of degrees $m$, $n$, $m+n$
with $m\geq{n}$.
\ENSURE $C\pe{AB}$
\IF{$n\leq\threshold$}\hfill\COMMENT{Constant-time if $\threshold\in\bigO{1}$}
\RETURN the quadratic in-place multiplication. \hfill\COMMENT{\Cref{alg:classicmul}}
\ELSIF{$m>n$}
\STATE Let $A(X)=A_0(X)+X^{n+1}A_1(X)$
\STATE $C_{0..2n}\pe{A_0B}$\hfill\COMMENT{Recursive call}
\IF{$m\geq{2n}$}
\STATE $C_{(n+1)..(n+m)}\pe{A_1B}$\hfill\COMMENT{Recursive call}
\ELSE
\STATE $C_{(n+1)..(n+m)}\pe{BA_1}$\hfill\COMMENT{Recursive call}
\ENDIF
\ELSE\hfill\COMMENT{Now $m=n$}
\STATE Let $\delta=\lceil(2n+1)/4\rceil$; \hfill\COMMENT{$\delta-1\geq{2n-3\delta}$ and thus $\delta>n-\delta$}
\STATE Let $A=a_0+X^{\delta}a_1$; $B=b_0+X^{\delta}b_1$;
\STATE Let $C=c_{00}+c_{01}X^{\delta}+c_{10}X^{2\delta}+c_{11}X^{3\delta}$;
\hfill\COMMENT{$d^\circ{c_{11}}=2n-3\delta$}
\STATE\label{lin:m0BEG}$c_{01}\me{c_{00}}$;\quad $c_{10}\me{c_{01}}$;
\STATE $\begin{smatrix} c_{00} \\ c_{01}\end{smatrix} \pe a_0\cdot{b_0} $
\hfill\COMMENT{Recursive call for $m_0$}
\STATE $c_{11}\me{c_{10}{\scriptstyle[0..2n{-}3\delta]}}$;
\hfill\COMMENT{first $2n-3\delta+1$ coefficients of $c_{10}$}
\STATE $\begin{smatrix} c_{01} \\ c_{10}\end{smatrix} \pe a_1\cdot{b_1} $
\hfill\COMMENT{Recursive call for $m_1$}
\STATE $c_{11}\pe{c_{10}{\scriptstyle[0..2n{-}3\delta]}}$;
\hfill\COMMENT{as $d^\circ m_{11} \leq 2n-3\delta$}
\STATE\label{lin:m0END}$c_{10}\pe{c_{01}}$;\quad $c_{01}\pe{c_{00}}$;
\STATE $a_{0}\me{a_{1}}$;\quad\quad $b_{0}\me{b_{1}}$;\hfill\COMMENT{$d^\circ{a_0}=\delta-1\geq{n-\delta}=d^\circ{a_1}$}
\STATE $\begin{smatrix} c_{01} \\ c_{10}\end{smatrix} \me a_0\cdot{b_0}$
    \hfill\COMMENT{Recursive call\footnote{Variant that computes $C\me AB$, obtained by changing signs at each recursive call.} for $m_2$}
\STATE $b_{0}\pe{b_{1}}$;\quad\quad $a_{0}\pe{a_{1}}$;
\ENDIF
\RETURN $C$.
\end{algorithmic}
\end{minipage}
\end{algorithm}

\begin{proposition}
\cref{alg:accinplmulkara} is correct and requires
$\bigO{mn^{\log_2(3)-1}}$ operations.
\end{proposition}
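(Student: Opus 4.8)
The plan is to establish correctness and the complexity bound separately, each by induction on the recursion (conveniently on $m+n$, with base case $n\le\threshold$).

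For correctness, the base case is immediate: when $n\le\threshold$ the quadratic in-place accumulator of \cref{alg:classicmul} computes $C\pe AB$ while leaving $A,B$ read-only. For the unbalanced branch $m>n$, I would invoke the identity $AB=A_0B+X^{n+1}A_1B$ coming from the split $A=A_0+X^{n+1}A_1$: the first recursive call accumulates $A_0B$ (degree $\le 2n$) into $C_{0..2n}$ and the second accumulates $A_1B$ (degree $\le m-1$) into $C_{(n+1)..(n+m)}$. Since both are pure $\pe$-accumulations, the overlap on $C_{(n+1)..2n}$ combines additively and the order is irrelevant; and because each recursive call restores its inputs by the induction hypothesis, $B$ is intact for the second call and $A$ is ultimately restored. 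I would also observe that the test $m\ge 2n$ versus $m<2n$ only serves to keep the first factor of degree at least that of the second, which matters for the complexity but not for correctness since the product is commutative.

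The heart is the balanced branch $m=n$, which I would verify by tracing the schedule from line~\ref{lin:m0BEG} to line~\ref{lin:m0END} (and the three lines that follow) against the target values $t_{00},t_{01},t_{10},t_{11}$ of \cref{eq:karasplit}. Concretely I would: (i) confirm the choice $\delta=\lceil(2n+1)/4\rceil$ yields $2\delta>n$, hence the ordering $\delta-1\ge n-\delta$ that validates the preconditions of the three recursive products and guarantees each product $a_ib_i$ has degree $<2\delta$, so it occupies exactly two consecutive blocks (the two-block accumulation setting of \cref{alg:doublebilin}, whose invertible $2\times 2$ block is provided by \cref{lem:fullrank}); (ii) confirm $2n-3\delta\le\delta-1$, so $c_{11}$ is the shortest block and, crucially, $\deg m_{11}\le 2n-3\delta$; (iii) track each variable symbolically, checking that the pair of $2\times2$ inversions on $(c_{00},c_{01})$ and $(c_{01},c_{10})$, both governed by $M=\begin{smatrix}1&0\\1&1\end{smatrix}$ as in \cref{eq:twobytwo}, together with the surrounding pre-/post-operations, distribute $m_0$ and $m_1$ to the three affected blocks. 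The step I expect to be the main obstacle is the handling of $c_{11}$: the partial operations $c_{11}\me c_{10}[0..2n{-}3\delta]$ and $c_{11}\pe c_{10}[0..2n{-}3\delta]$ net out to $c_{11}\pe m_{11}$ \emph{only because} $\deg m_{11}\le 2n-3\delta$, so that the truncation captures all of $m_{11}$; this is exactly where the arithmetic choice of $\delta$ feeds back into correctness. Finally I would check that $a_0\me a_1$ and $b_0\me b_1$ before the $m_2$ subtraction, followed by the symmetric restorations $b_0\pe b_1$ and $a_0\pe a_1$, return $A,B$ to their initial states.

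For the complexity, write $T(m,n)$ for the operation count with $m\ge n$. In the balanced case the three recursive products have sizes at most $\delta=\lceil(2n+1)/4\rceil\approx n/2$ and the surrounding in-place work is $\bigO{n}$ additions, so $T(n,n)\le 3\,T(\lceil n/2\rceil,\lceil n/2\rceil)+\bigO{n}$; since $2^{\log_2 3}=3$ the linear overhead is of lower order and the standard Karatsuba recurrence gives $T(n,n)=\bigO{n^{\log_2 3}}$. In the unbalanced case, each pass spends one balanced product $A_0B$ of size $n$ at cost $\bigO{n^{\log_2 3}}$ and recurses on a problem whose larger degree has dropped by $n+1$; unrolling yields $\bigO{m/n}$ such passes, hence $T(m,n)=\bigO{(m/n)\,n^{\log_2 3}}=\bigO{m\,n^{\log_2 3-1}}$. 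I would present this as an induction proving $T(m,n)\le C\,m\,n^{\log_2 3-1}$, where the only care needed is to absorb the additive $\bigO{n}$ term and the ceilings in $\delta$ into the constant, which is routine because $\log_2 3>1$; the base case $n\le\threshold=\bigO{1}$ costs $\bigO{mn}=\bigO{m}$, consistent with the bound.
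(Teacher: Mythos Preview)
Your correctness argument is sound and in fact more detailed than the paper's, which simply invokes \cref{thm:doublebilin} applied to~\cref{eq:bilinkara,eq:bilinkarasplit} without tracing the schedule.

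For the complexity you take a genuinely different route. You propose a direct induction $T(m,n)\le C\,m\,n^{\log_2 3-1}$; the paper instead unrolls the unbalanced recursion into the full Euclidean remainder sequence $u_1=m,\,u_2=n,\,u_3,\dots$, bounds the total cost by $\sum_i\lfloor u_i/u_{i+1}\rfloor\,u_{i+1}^{\log_2 3}\le n^{\log_2 3-1}\sum_i s_i$ with $s_i=u_i+u_{i+1}$, and then invokes an external result (\cite{Grenet:2020:euclide}) giving $s_i\le s_1(2/3)^{i-1}$, hence $\sum_i s_i=\bigO{m+n}$. Your induction is more self-contained; the paper's argument makes the Euclidean structure explicit.

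That structure is exactly what your informal description glosses over. The unbalanced branch does \emph{not} terminate after $\bigO{m/n}$ size-$n$ balanced products: once $m$ drops into $(n,2n]$, the call $B A_1$ swaps roles and you are at $T(n,r)$ with $r=m-n-1<n$, which then generates $\bigO{n/r}$ size-$r$ balanced products, and so on down the Euclidean chain. Your induction does absorb this tail through the hypothesis $T(n,r)\le C n\,r^{\log_2 3-1}$, but the step you must then verify is
\[
C_0\,n^{\log_2 3}+C\,n\,r^{\log_2 3-1}\;\le\; C\,(n+r+1)\,n^{\log_2 3-1},
\]
where $C_0$ is the balanced-case constant. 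Writing $t=r/n$ and $\alpha=\log_2 3-1$, this reduces to $C_0/C\le 1-(t^\alpha-t)$; since $\max_{t\in[0,1]}(t^\alpha-t)\approx 0.195$, you need roughly $C\ge 1.24\,C_0$. So the ``only care needed'' is not merely absorbing the $\bigO{n}$ overhead and the ceilings in $\delta$: the role-swap step imposes its own constraint on the constant, and this is precisely where the paper's Euclidean summation does the work you are folding into the induction.
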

\begin{proof}
With the above analysis, correctness comes from that
of~\cref{alg:doublebilin} applied
to~\cref{eq:bilinkara}.
When $m=n$, with $3$ recursive calls and \bigO{n} extra operations,
the algorithm thus requires overall $\bigO{n^{\log_2(3)}}$ operations.
Otherwise, it requires $\left\lfloor\frac{m}{n}\right\rfloor$ equal
degree calls, then a recursive call with $n$ and $(m\bmod{n})$.
Let $u_1=m$, $u_2=n$, $u_3$, \dots, $u_k$ denote the
successive residues in the Euclidean algorithm on inputs $m$ and $n$
(where $u_k$ is the last nonzero residue).
Then, \cref{alg:accinplmulkara} requires
less than
$\bigO{\sum_{i=1}^{k-1}\lfloor\frac{u_i}{u_{i+1}}\rfloor{}u_{i+1}^{\log_2(3)}}
\leq \bigO{\sum_{i=1}^{k-1} u_iu_{i+1}^{\log_2(3)-1}}$
operations.
But, $u_{i+1}\leq{u_2}=n$ and if we let $s_i=u_i+u_{i+1}$, 
$u_i\leq{s_i}$.
Thus, the number of operations is bounded by
$\bigO{\sum_{i=1}^{k-1} s_i n^{\log_2(3)-1}}$. 
From~\cite[Corollary~2.6]{Grenet:2020:euclide}, we have that 
$s_i\leq s_1(2/3)^{i-1}$. Therefore, $\sum_{i=1}^{k-1} s_i
\leq s_1 \sum_{i\geq0}(2/3)^i = \bigO{m+n}$,
and the number of operations is $\bigO{mn^{\log_2(3)-1}}$.
\end{proof}

Note that all coefficients of $\mat{\alpha}$,
$\mat{\beta}$ and $\mu^{(2)}$ being $1$ or $-1$, \cref{alg:accinplmulkara}
does compute the accumulation $C\pe{AB}$ without constant
multiplications.
Also, the de-duplication enables some
natural reuse.
There is thus a cost of
$2(\#\alpha-t+\#\beta-t)=2(4-3+4-3)=4$ additions with
$a_0,a_1,b_0,b_1$.
Then $2(2(\#\mu-t)-1)=2(\#\mu^{(2)}-2t-1)=2(10-6-1)$ additions with
$c_{ij}$ (\emph{i.e.}, $10-6$ minus the one saved by factoring
$m_{01}+m_{10}$).
This is a total of $3$ recursive
accumulating calls and at most $10$ half-block additions.
For degree $n-1$ (size $n$) polynomials, this is between $5n-\tfrac12$ and
$5n-5$ additional additions, and with $2$ operations for an
accumulating base case, this gives
a dominant term between $11.75n^{\log_2(3)}$ and $9.5n^{\log_2(3)}$.
A careful Karatsuba implementation for both polynomials of {\em size} $n$ a
power of two requires instead $6.5n^{\log_2(3)}$
operations~\cite{Roche:2009:spacetime}.
Now in practice, the supplementary operations are in fact, at least
mostly, compensated by the gain in memory allocations or movements:
depending on the compilator and flags we obtain some slight speed-up
or slow-down, about $\pm{10\%}$, when compared to the state of the art
Karatsuba modular ($60$ bits prime) polynomial multiplication of the
\href{https://github.com/libntl/ntl}{NTL} (\url{https://libntl.org})
library.

We compare in~\cref{tab:kara} the procedure given
in~\cref{alg:accinplmulkara} (obtained via the automatic application
of~\cref{alg:doublebilin}) with previous Karatsuba-like algorithms for
polynomial multiplications, designed to reduce their memory footprint
(see also \cite[Table~2.2]{Giorgi:2019:hdr}).

\begin{table}[ht]\centering
\caption{Reduced-memory algorithms for Karatsuba polynomial
  multiplication}\label{tab:kara}
\begin{tabular}{lcccc}
\toprule
\multirow{2}{*}{Algorithm} & \multicolumn{2}{c}{Memory Reg.} & \multirow{2}{*}{Inputs} & \multirow{2}{*}{Accumulation} \\
& Algebraic& Pointer & &\\
\midrule
\cite{Thome:2002:karatemp} & $n$ & $5\log{n}$ & {\color{teal} read-only} & {\xno}\\
\cite{Roche:2009:spacetime,Roche:2011:waterloo} & $0$ & $5\log{n}$ & {\color{teal} read-only} & {\xno}\\
\cite{Giorgi:2019:issac:reductions} &  \multicolumn{2}{c}{\textcolor{teal}{$\bigO{1}$}} & {\color{teal} read-only} & {\xno}\\
\cref{alg:accinplmulkara} & $0$ & $5\log n$ & mutable & {\xyes}\\
\bottomrule
\end{tabular}
\end{table}

\subsection{Further bilinear polynomial multiplications}\label{ssec:toom}

We have shown that any bilinear algorithm can be transformed into an
in-place version. This approach thus also works for any Toom-$k$
algorithm using $2k-1$ interpolations points instead of the three
points of Karatsuba (Toom-$2$).

For instance Toom-$3$ uses interpolations at
$0,1,-1,2,\infty$. Therefore, $\alpha$ and $\beta$ are the Vandermonde matrices
of these points for the $3$ parts of the input polynomials and
$\mu$ is the inverse of the Vandermonde matrix of these points for the
$5$ parts of the result, as shown in~\cref{eq:toom3} thereafter.

\begin{equation}\label{eq:toom3}
\mu=\begin{smatrix}
1 & 0 & 0 & 0 & 0 \\
1 & 1 & 1 & 1 & 1 \\
1 & -1 & 1 & -1 & 1 \\
1 & 2 & 4 & 8 & 16 \\
0 & 0 & 0 & 0 & 1
\end{smatrix}^{-1}\!\!\!\!=
\begin{smatrix}
 1 & 0 & 0 & 0 & 0 \\
{\scriptscriptstyle{-}}\frac{1}{2} & 1 & {\scriptscriptstyle{-}}\frac{1}{3} &  {\scriptscriptstyle{-}}\frac{1}{6} & 2 \\
 -1 & \frac{1}{2} & \frac{1}{2} & 0 & -1 \\
\frac{1}{2} & {\scriptscriptstyle{-}}\frac{1}{2} & {\scriptscriptstyle{-}}\frac{1}{6} & \frac{1}{6} & -2 \\
 0 & 0 & 0 & 0 & 1
\end{smatrix}\!;~\alpha=\beta=\begin{smatrix}
1 & 0 & 0 \\
1 & 1 & 1 \\
1 & -1 & 1 \\
1 & 2 & 4 \\
0 & 0 & 1
\end{smatrix}
\end{equation}

With the same kind of duplication as in~\cref{eq:bilinkarasplit}, apart from the
recursive calls, the initially obtained operation count is
$2(11+11-2*5)+2(2(16-5))=68$ additions
and $2(2+2+2(11))=52$ scalar multiplications.
Following the optimization of \cite{Bodrato:2007:WAIFI:toomcook},
we see in $\alpha$ and $\beta$ that the evaluations at $1$ and $-1$
(second and third rows) share one addition. As they are successive in
our main loop, subtracting one at the end of the second iteration,
then followed by re-adding it at the third iteration can be optimized
out. This is two fewer operations.
Together with shared coefficients in the rows of $\mu$,
some further optimizations of \cite{Bodrato:2007:WAIFI:toomcook} can
probably also be applied, where the same multiplicative
constants appear at successive places.
Currently, for instance,
\href{https://github.com/jgdumas/plinopt}{\textsc{PLinOpt}}
produces a program with only $56$ additions and $51$ scalar multiplications.

\subsection{Fast bilinear polynomial multiplication}\label{ssec:fft}

When sufficiently large roots of unity exist, polynomial
multiplications can be computed fast in our in-place model via a
discrete Fourier transform and its inverse. %
For simplicity,
we consider a ring $\D$ that contains a principal $N$-th root of unity $\omega\in\D$
for some $N = 2^p$. (In particular, $2$ is a unit in $\D$.)

Let $F\in\D[X]$ of degree $<N$.
The discrete Fourier transform of $F$ at $\omega$ is defined as
$\DFT_{N}(F,\omega) = (F(\omega^0), F(\omega^1), \dotsc, F(\omega^{N-1}))$.
The map is invertible, of inverse $\DFT^{-1}_{N}(\cdot,\omega) = \frac{1}{N} \DFT_{N}(\cdot,
\omega^{-1})$. Further, the DFT can be computed over-place, replacing the input by the
output~\cite{1965:CooleyTukey:MathComp:FFT}. Actually, for over-place
algorithms and their extensions to the \emph{truncated Fourier transform}, it is
more natural to work with the \emph{bit-reversed DFT} defined by
$\brDFT_{N}(F,\omega) = (F(\omega^{[0]_{p}}), F(\omega^{[1]_{p}}), \dotsc, F(\omega^{[N-1]_{p}}))$
where $[i]_{p}=\sum_{j=0}^{p-1} d_j 2^{p-j}$ is the length-$p$ bit reversal of $i = \sum_{j=0}^{p-1} d_j 2^j$,
$d_j\in\{0,1\}$.
Its inverse is $\brDFT_N^{-1}(\Lambda,\omega) = \frac{1}{N}\DFT_N((\Lambda_{[0]_p},\dots,\Lambda_{[N-1]_p}),\omega^{-1})$.

\begin{remark}
    The Fast Fourier Transform (FFT) algorithm has two main variants:
    \emph{decimation in time} (DIT) and \emph{decimation in frequency} (DIF).
    Both algorithms can be performed over-place, replacing the input by the
    output. Without applying any permutation to the entries of the input/output
    vector, the over-place DIF-FFT algorithm naturally computes
    $\brDFT_{N}(\cdot,\omega)$, while the over-place DIT-FFT algorithm on $\omega^{-1}$
    computes $N\cdot\brDFT_{N}^{-1}(\cdot, \omega)$.
\end{remark}

\begin{algorithm}[htbp]
\caption{In-place power of two accumulating multiplication.}\label{alg:fft2pow}
\begin{algorithmic}[1]
    \REQUIRE $\vec{a}$, $\vec{b}$ and $\vec{c}$ of respective lengths $n$, $n$ and $N = 2n$, containing the coefficients of $A$, $B$, $C\in\D[X]$ respectively; $\omega\in\D$ principal $N$-th root of unity, with $N = 2^p$.
\ENSURE $\vec{c}$ contains the coefficients of $C+A\cdot B$.
\STATE $\vec{c}\gets \brDFT_{2n}(\vec{c},\omega)$;
    \hfill\COMMENT{over-place}
\STATE\label{lin:DFTab}$\vec{a}\gets \brDFT_n(\vec{a},\omega^2)$; $\vec{b}\gets \brDFT_n(\vec{b},\omega^2)$
    \hfill\COMMENT{over-place}
\ForDoEnd{$i=0$ \To $n-1$}{$c_i\pe a_i\times b_i$}
\STATE $\vec{a}\gets \brDFT^{-1}_n(\vec{a},\omega^2)$; $\vec{b}\gets \brDFT^{-1}_n(\vec{b},\omega^2)$
    \hfill\COMMENT{undo \ref{lin:DFTab}}
\ForDoEnd[lin:mulw]{$i=0$ \To $n-1$}{$a_i\fe \omega^i$; $b_i\fe \omega^i$}
\STATE\label{lin:DFTab2}$\vec{a}\gets \brDFT_n(\vec{a},\omega^2)$; $\vec{b}\gets
\brDFT_n(\vec{b},\omega^2)$
    \hfill\COMMENT{over-place}
\ForDoEnd{$i=0$ \To $n-1$}{$c_{i+n}\pe a_i\times b_i$}
\STATE $\vec{a}\gets \brDFT^{-1}_n(\vec{a},\omega^2)$; $\vec{b}\gets \brDFT^{-1}_n(\vec{b},\omega^2)$
    \hfill\COMMENT{undo \ref{lin:DFTab2}}
\ForDoEnd{$i=0$ \To $n-1$}{$a_i\de \omega^i$; $b_i\de \omega^i$}
    \hfill\COMMENT{undo \ref{lin:mulw}}
\RETURN $\vec{c}\gets \brDFT^{-1}_{2n}(\vec{c},\omega)$
\end{algorithmic}
\end{algorithm}

\begin{theorem}\label{thm:fft2pow}
  Using an over-place $\brDFT$ algorithm with
  complexity bounded by $\bigO{n\log n}$, \cref{alg:fft2pow} is
  correct, in-place and has complexity bounded by $\bigO{n\log n}$.
\end{theorem}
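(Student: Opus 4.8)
The plan is to establish the three claims separately, with correctness resting on a single index-bookkeeping identity for the bit reversal, after which the over-place structure of the transforms makes the in-place and complexity claims routine. For \textbf{correctness} I would argue by evaluation--interpolation in the frequency domain: the goal is to show that, just before the final line of \cref{alg:fft2pow}, the array $\vec c$ holds $\brDFT_{2n}(C+AB,\omega)$, so that the closing $\brDFT^{-1}_{2n}$ returns the coefficients of $C+AB$ by invertibility of the transform. Since the first line replaces $\vec c$ by $\brDFT_{2n}(C,\omega)$, i.e. by the vector whose $j$-th entry is $C(\omega^{[j]_p})$, and since $\brDFT_{2n}$ is linear, it suffices to show that the algorithm adds $A(\omega^{[j]_p})B(\omega^{[j]_p})$ to position $j$ for every $j=0,\dots,2n-1$; then $\vec c = \brDFT_{2n}(C,\omega)+\brDFT_{2n}(AB,\omega)=\brDFT_{2n}(C+AB,\omega)$.

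The key step, and the main obstacle, is to match the two size-$n$ transforms on $\vec a,\vec b$ with the two halves of the size-$2n$ evaluation grid. Writing $N=2n=2^p$ and $n=2^{p-1}$, the only property of the length-$p$ bit reversal I need is
\[
  [j]_p = \lfloor j/n\rfloor + 2\,[\,j\bmod n\,]_{p-1},\qquad 0\le j<N,
\]
obtained from the definition by splitting off the top bit of $j$. Hence for $j<n$ the evaluation point is $\omega^{[j]_p}=(\omega^2)^{[j]_{p-1}}$, an even power, while for $j=n+i$ it is $\omega^{[j]_p}=\omega\,(\omega^2)^{[i]_{p-1}}$, an odd power. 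Consequently, after \cref{lin:DFTab} the $j$-th entry of $\vec a$ is exactly $A((\omega^2)^{[j]_{p-1}})=A(\omega^{[j]_p})$ for $j<n$, and likewise for $\vec b$, so the first accumulation loop adds the correct product to $c_0,\dots,c_{n-1}$. For the odd points, the twist $a_i\fe\omega^i$ of \cref{lin:mulw} turns $\vec a$ into the coefficient vector of $A(\omega X)$; applying $\brDFT_n(\cdot,\omega^2)$ in \cref{lin:DFTab2} then puts $A(\omega\,(\omega^2)^{[i]_{p-1}})=A(\omega^{[n+i]_p})$ in position $i$, so the second loop adds the correct product to $c_{n},\dots,c_{2n-1}$. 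Here I use that $\omega^2$ is a principal $n$-th root of unity, which holds because $\omega$ is a principal $2n$-th root and $2$ is a unit in $\D$, so all the size-$n$ transforms are well defined and invertible. This proves the claim on $\vec c$, hence correctness.

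For the \textbf{in-place} property, including restoration of the inputs, I would read off from the schedule that \cref{lin:DFTab} is undone by the inverse transform on the following line, \cref{lin:DFTab2} by its inverse transform, and the twist \cref{lin:mulw} by the matching division loop; since each $\brDFT_n^{-1}$ exactly inverts the corresponding $\brDFT_n$ and division by $\omega^i$ inverts multiplication by $\omega^i$, the vectors $\vec a$ and $\vec b$ are returned to their initial contents, while only $\vec c$ is modified. Every transform is over-place by hypothesis, the accumulation and (un)twisting loops act entrywise, and the powers $\omega^i$ can be generated incrementally in a single register (multiplying a running value by $\omega$ at each step), so the only storage beyond $\vec a,\vec b,\vec c$ is $\bigO{1}$ together with the $\bigO{\log n}$-pointer recursion stack of the underlying FFT, matching our notion of in-place.

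Finally, for the \textbf{complexity}, the algorithm performs two over-place transforms of size $2n$ and eight size-$n$ transforms, each costing $\bigO{n\log n}$ by hypothesis, plus a constant number of entrywise loops of length $n$ (the pointwise multiply--accumulations and the twist/untwist, with the powers of $\omega$ produced on the fly). Summing these contributions yields an overall cost of $\bigO{n\log n}$, which completes the proof.
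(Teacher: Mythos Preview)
Your proof is correct and follows essentially the same approach as the paper: the same bit-reversal identity $[j]_p=2[j\bmod n]_{p-1}+\lfloor j/n\rfloor$ is used to split $\brDFT_{2n}$ into two size-$n$ transforms (one on $A$, one on the twisted $A(\omega X)$), with the in-place and complexity claims following from the over-place $\brDFT^{\pm1}$ routines. One minor remark: you mention an $\bigO{\log n}$ recursion stack for the underlying FFT, whereas the paper points out that the iterative (DIT/DIF) FFT needs no call stack at all, so the space is genuinely $\bigO{1}$ registers.
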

\begin{proof}
    \cref{alg:fft2pow} follows the pattern of the standard FFT-based
    multiplication
    algorithm. Our goal is to compute $\brDFT_{2n}(A,\omega)$, $\brDFT_{2n}(B,\omega)$ and
    $\brDFT_{2n}(C,\omega)$,
    then obtain $\brDFT_{2n}(C+AB,\omega)$ and finally $C+AB$ using an inverse
    $\brDFT$. Computations on $C$ and then $C+AB$ are performed over-place using
    any standard over-place $\brDFT$ algorithm. The difficulty happens for $A$ and
    $B$ that are stored in length-$n$ arrays. We use the following property of
    the bit reversed order: for $k < n$, $[k]_{p} = 2[k]_{p-1}$, and for
    $k \ge n$, $[k]_{p} = 2[k-n]_{p-1}+1$. Therefore, the first $n$
    coefficients of $\brDFT_{2n}(A,\omega)$ are $(A(\omega^{2[0]_{p-1}}), \dotsc
    A(\omega^{2[n-1]_{p-1}})) = \brDFT_n(A,\omega^2)$. Similarly, the next $n$
    coefficients are $\brDFT_n(A(\omega X), \omega^2)$. Therefore, one can compute
    $\brDFT_n(A,\omega^2)$ and $\brDFT_n(B,\omega^2)$ in $\vec{a}$ and $\vec{b}$ respectively,
    and update the first $n$ entries of $\vec{c}$. Next we restore $\vec{a}$
    and $\vec{b}$ using $\brDFT^{-1}_n(\cdot,\omega^2)$. We compute $A(\omega X)$ and
    $B(\omega X)$ and again $\brDFT_n(A(\omega X),\omega^2)$ and $\brDFT_n(B(\omega X),\omega^2)$ to update the
    last $n$ entries of $\vec{c}$. Finally, we restore $\vec{a}$ and
    $\vec{b}$ and perform $\brDFT^{-1}$ on $\vec{c}$.
    The cost is dominated by the ten $\brDFT^{\pm1}$ computations.
\end{proof}

The standard (not in-place) algorithm uses two $\brDFT$ and one $\brDFT^{-1}$ in size $2n$.
Since our in-place variant uses 2 size-$2n$ and 8 size-$n$ $\brDFT^{\pm1}$, the dominant
term is twice as large.

The case where the sizes are not powers of two is loosely similar, using as a routine
a truncated Fourier transform (TFT) rather than a
DFT~\cite{2004:vanderHoeven:ISSAC:TFT}. Let $\omega$ still be an $N$-th root of
unity for some $N = 2^p$, and $n < N$.
The length-$n$ (bit-reversed) TFT of a polynomial $F\in\D[X]$ at $\omega$
is $\brTFT_n(F,\omega) = (F(\omega^{[0]_p}), \dotsc, F(\omega^{[n-1]_p}))$, that is
the $n$ first coefficients of $\brDFT_{N}(F,\omega)$.
As for the DFT, the (bit-reversed) TFT and its inverse
can be computed over-place~\cite{Harvey:2010:issactft, Roche:2011:waterloo,
2013:Arnold:ISSAC:TFT, Coxon:2022:JSC:inplaceTFT}.

Given inputs $A$ and $B\in\D[X]$ of respective lengths $m$ and $n$ and an output
$C\in\D[X]$ of length $m+n-1\leq N$, we aim to replace $C$ by $C+AB$. As in the power-of-two
case, we first replace $C$ by %
$\brTFT_{m+n-1}(C,\omega)$ in  $\vec{c}$.
Then we progressively update $\vec{c}$ using small $\brTFT$'s on the inputs, using the following lemma.

\begin{lemma}[\cite{Harvey:2010:issactft,Roche:2011:waterloo}]\label{lem:roche}
    Let $F\in\D[X]$, $\ell$, $s\in\Z{>0}$ %
    where $2^\ell$ divides
    $s$, and $\omega$ be a $2^p$-th principal root of unity. If $F_{s,\ell}(X) =
    F(\omega^{[s]_p}X)\bmod X^{2^\ell-1}$,
    \(\brDFT_{2^\ell}(F_{s,\ell},\omega^{2^{p-\ell}}) = (F(\omega^{[s]_p}),\dotsc,F(\omega^{[s+2^\ell-1]_p}))\).
\end{lemma}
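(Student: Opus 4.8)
The plan is to reduce the claimed vector identity to a single combinatorial fact about the bit-reversal permutation, combined with the elementary observation that reducing modulo $X^{2^\ell}-1$ leaves untouched the values of a polynomial at $2^\ell$-th roots of unity. Concretely, I would prove that the two sides agree coordinatewise: for each $k$ with $0\le k<2^\ell$, the $k$-th coordinate of $\brDFT_{2^\ell}(F_{s,\ell},\omega^{2^{p-\ell}})$ equals $F(\omega^{[s+k]_p})$.

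The heart of the argument, and the step I expect to demand the most care, is a decomposition of the reversed indices. Since $2^\ell\mid s$, the $\ell$ lowest bits of $s$ vanish, so for $0\le k<2^\ell$ the length-$p$ expansion of $s+k$ carries the bits of $k$ in its $\ell$ low positions and the bits of $s$ in its $p-\ell$ high positions, with no carry. Reversing the length-$p$ string then sends the $k$-bits to the top $\ell$ positions and the $s$-bits to the bottom $p-\ell$ positions, giving
\[
[s+k]_p \;=\; [s]_p \;+\; 2^{p-\ell}\,[k]_\ell, \qquad 0\le k<2^\ell,
\]
where $[k]_\ell$ is the length-$\ell$ reversal of $k$. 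The two summands do not interfere because $[s]_p<2^{p-\ell}$ (the reversal of $s$, whose top $\ell$ bits are zero, lands in the low $p-\ell$ bits), while $2^{p-\ell}[k]_\ell$ is a multiple of $2^{p-\ell}$ occupying the top $\ell$ bits. Verifying this by splitting the reversal sum $\sum_j d_j 2^{p-1-j}$ at index $\ell$ is where the interaction between the divisibility hypothesis and the permutation must be handled precisely; everything else is essentially formal.

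Granting the identity, I would set $\omega_\ell:=\omega^{2^{p-\ell}}$, which is a principal $2^\ell$-th root of unity because $\omega$ is a principal $2^p$-th root, and raise $\omega$ to the displayed identity to obtain $\omega^{[s+k]_p}=\omega^{[s]_p}\,\omega_\ell^{[k]_\ell}$. On the other side, by definition the $k$-th coordinate of $\brDFT_{2^\ell}(F_{s,\ell},\omega_\ell)$ is $F_{s,\ell}(\omega_\ell^{[k]_\ell})$; and since $F_{s,\ell}(X)=F(\omega^{[s]_p}X)\bmod(X^{2^\ell}-1)$ while $\omega_\ell^{[k]_\ell}$ is a $2^\ell$-th root of unity, the modulus vanishes at that point and the reduction does not change the evaluation, so $F_{s,\ell}(\omega_\ell^{[k]_\ell})=F(\omega^{[s]_p}\omega_\ell^{[k]_\ell})$. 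Chaining the two equalities yields $F_{s,\ell}(\omega_\ell^{[k]_\ell})=F(\omega^{[s+k]_p})$ for every $k$, which is exactly the asserted equality of the two length-$2^\ell$ vectors.
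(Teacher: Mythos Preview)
Your proof is correct and follows essentially the same route as the paper's: both define $\omega_\ell=\omega^{2^{p-\ell}}$, reduce the evaluation $F_{s,\ell}(\omega_\ell^{[k]_\ell})$ to $F(\omega^{[s]_p}\omega_\ell^{[k]_\ell})$, and then use the bit-reversal identity (your $[s+k]_p=[s]_p+2^{p-\ell}[k]_\ell$ is precisely the paper's two facts $2^{p-\ell}[i]_\ell=[i]_p$ and $[s]_p+[i]_p=[s+i]_p$ combined). You are simply more explicit than the paper about why the reduction modulo $X^{2^\ell}-1$ is harmless and why the divisibility hypothesis prevents carries in the reversal sum.
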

\begin{proof}
    Let $\omega_\ell = \omega^{2^{p-\ell}}$. This is a principal $2^\ell$-th root of unity
    since $\omega$ is a principal $2^p$-th root of unity. In particular, for any $i <
    2^\ell$, $F_{s,\ell}(\omega_\ell^{[i]_\ell}) = F(\omega^{[s]_p}\omega_\ell^{[i]_\ell})$.
    Now, $\omega_\ell^{[i]_\ell} = \omega^{[i]_p}$ since $2^{p-\ell}[i]_\ell = [i]_p$.
    Furthermore, $[s]_p+[i]_p = [s+i]_p$ since $i < 2^\ell$ and $2^\ell$ divides
    $s$.
    Finally, $F_{s,\ell}(\omega_\ell^{[i]_\ell}) = F(\omega^{[s+i]_p})$.
\end{proof}

\begin{corollary}\label{cor:partTFT}
    Let $F\in\D[X]$ stored in an array $\vec{f}$ of length $n$,
    $\ell$, $k\in\Z_{>0}$ and $\omega$ be a $2^p$-th principal root of unity, with $2^\ell\le
    n$ and $(k+1)2^\ell \le 2^p$. There exists an algorithm,
    $\partTFT_{k,\ell}(\vec{f},\omega)$, that replaces the first $2^\ell$ entries of
    $\vec{f}$ by $F(\omega^{[k\cdot 2^\ell]_p})$, \dots,
    $F(\omega^{[(k+1)\cdot2^\ell-1]_p})$, and an inverse algorithm
    $\partTFT^{-1}_{k,\ell}$ that restores $\vec{f}$ to its initial state. Both
    algorithms are in-place have complexity bounded by
    $\bigO{n+\ell\cdot2^\ell}$.
\end{corollary}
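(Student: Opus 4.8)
The plan is to realize $\partTFT_{k,\ell}$ as a short composition of three reversible in-place steps, each the exact inverse of a step in $\partTFT^{-1}_{k,\ell}$, and to read off correctness directly from \cref{lem:roche} with $s=k\cdot 2^\ell$ (which is divisible by $2^\ell$, as required). Write $r=\lceil n/2^\ell\rceil$ for the number of length-$2^\ell$ blocks of $\vec{f}$ (the last one possibly partial), and call the entries $f_0,\dots,f_{2^\ell-1}$ the \emph{first block}.

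First I would \emph{twiddle} every coefficient: for $i=0,\dots,n-1$, set $f_i\fe\omega^{[s]_p\cdot i}$, generating the successive powers incrementally from $\omega^{[s]_p}$ in $\bigO{n}$ ring multiplications, so that the array holds the coefficients of $F(\omega^{[s]_p}X)$. Second I would \emph{fold} every higher block into the first one: for each $i\ge 2^\ell$, set $f_{(i\bmod 2^\ell)}\pe f_i$. Since this reads only indices $\ge 2^\ell$ and writes only indices $<2^\ell$, the higher blocks are left untouched and the first block now holds the $2^\ell$ coefficients of $F_{s,\ell}$, namely $g_j=\sum_{i\equiv j\,(2^\ell)}f_i\,\omega^{[s]_p i}$. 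Third I would run an over-place $\brDFT_{2^\ell}$ on the first block at $\omega^{2^{p-\ell}}$, a principal $2^\ell$-th root of unity; by \cref{lem:roche} the first block then contains exactly $F(\omega^{[k2^\ell]_p}),\dots,F(\omega^{[(k+1)2^\ell-1]_p})$, which is the claimed output. The inverse $\partTFT^{-1}_{k,\ell}$ simply reverses these three steps: an over-place inverse $\brDFT_{2^\ell}$ on the first block, then the un-fold $f_{(i\bmod 2^\ell)}\me f_i$ for $i\ge 2^\ell$, then the un-twiddle $f_i\de\omega^{[s]_p i}$.

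For the resource bounds: the twiddle/un-twiddle passes cost $\bigO{n}$ ring operations, the fold/un-fold passes cost $\bigO{n}$ additions, and the forward (resp. inverse) size-$2^\ell$ $\brDFT$ costs $\bigO{\ell\cdot 2^\ell}$ by the hypothesis that an over-place transform of that cost exists; the bit-reversal $[s]_p$ and the base power $\omega^{[s]_p}$ are precomputed with $\bigO{p}$ lower-order operations. Every step uses only the array $\vec{f}$ together with $\bigO{1}$ auxiliary registers (a running twiddle factor and loop indices), so both algorithms are in-place with total cost $\bigO{n+\ell\cdot 2^\ell}$.

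The only delicate point is reversibility, and it hinges entirely on the fact that both the fold and the $\brDFT$ touch only the first block: because the higher blocks still carry their twiddled values $f_i\,\omega^{[s]_p i}$ when $\partTFT^{-1}$ reaches its un-fold step, subtracting them recovers the pre-fold first block, after which the global un-twiddle restores $\vec{f}$ exactly. Invertibility of the scalar factors is not an obstacle, since $\omega^{[s]_p}$ is a power of a unit and $2$ is invertible in $\D$; thus the one thing to check carefully is that no step ever writes into a higher block, which is immediate from the index bounds $i\bmod 2^\ell < 2^\ell \le i$ for $i\ge 2^\ell$.
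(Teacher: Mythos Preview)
Your proof is correct and follows essentially the same three-step scheme as the paper: twiddle all coefficients by powers of $\omega^{[k\cdot 2^\ell]_p}$, fold the higher coefficients into the first $2^\ell$ positions, then apply an over-place $\brDFT_{2^\ell}$, with the inverse algorithm reversing these steps. The only cosmetic difference is your fold $f_{(i\bmod 2^\ell)}\pe f_i$ versus the paper's cascading $f_{i-2^\ell}\pe f_i$; your variant has the pleasant feature that the higher blocks are literally untouched, which makes the reversibility argument you give especially transparent.
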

\begin{proof}
    Algorithm $\partTFT_{k,\ell}(\vec{f}, \omega)$ is the following:
    \begin{algorithmic}[1]
    \ForDoEnd{$i=0$ \To $n-1$}{$f_i\fe \omega^{i[k\cdot2^\ell]_p}$}
    \ForDoEnd{$i=2^\ell$ \To $n-1$}{$f_{i-2^\ell}\pe f_i$}
    \STATE $\vec{f}_{0..2^\ell-1} \gets \brDFT_{2^\ell}(\vec{f}_{0..2^\ell-1},\omega^{2^{p-\ell}})$
    \end{algorithmic}
    Its correctness is ensured by~\cref{lem:roche}. Its inverse algorithm
    $\partTFT^{-1}_{k,\ell}(\vec{f},\omega)$ does the converse:
    \begin{algorithmic}[1]
    \STATE $\vec{f}_{0..2^\ell-1} \gets\brDFT^{-1}_{2^\ell}(\vec{f}_{0..2^\ell-1},\omega^{2^{p-\ell}})$
    \ForDoEnd{$i=2^\ell$ \To $n-1$}{$f_{i-2^\ell}\me f_i$}
    \ForDoEnd{$i=0$ \To $n-1$}{$f_i\de \omega^{i[k\cdot2^\ell]_p}$}
    \end{algorithmic}
    In both algorithms, the call to $\brDFT^{\pm1}$ has cost
    $\bigO{\ell\cdot{2^\ell}}$, and the two other steps have cost~$\bigO{n}$.
\end{proof}

To implement the previously sketched strategy, we assume that $m\le n$ for
simplicity. We let $\ell$, $t$ be such that $2^\ell\le m<2^{\ell+1}$ and
$2^{\ell+t}\le n<2^{\ell+t+1}$. Using $\partTFT^{\pm 1}$, we are able to compute
$(A(\omega^{[k\cdot 2^\ell]_p}), \dotsc, A(\omega^{[(k+1)\cdot 2^{\ell}-1]_p}))$ for any
$k$ and restore $A$ afterwards, and similarly for
\[(B(\omega^{[k\cdot 2^{\ell+t}]_p}), \dotsc, B(\omega^{[(k+1)\cdot 2^{\ell+t}-1]_p})).\]

\begin{algorithm}[htbp]
\caption{In-place fast accumulating polynomial multiplication.}\label{alg:fftaccu}
\begin{algorithmic}[1]
\REQUIRE $\vec{a}$, $\vec{b}$ and $\vec{c}$ of length $m$, $n$ and $m+n-1$,
$m\le n$, containing the coefficients of $A$, $B$, $C\in\D[X]$ respectively;
$\omega\in\D$ principal $2^p$-th root of unity with $2^{p-1} < m+n-1 < 2^p$
\ENSURE $\vec{c}$ contains the coefficients of $C+A\cdot B$.
\STATE $\vec{c}\gets \brTFT_{m+n-1}(\vec{c},\omega)$;
    \hfill\COMMENT{over-place}
\STATE $r\gets m+n-1$
\WHILE{$r \ge 0$}
\STATE\label{lin:boundslt}$\ell\gets\lfloor\log_2\min\{r, m\}\rfloor$; $t\gets\lfloor\log_2\min\{r, n\}\rfloor-\ell$;
 \STATE\label{lin:boundsk}$k\gets m+n-1-r$
 \hfill\COMMENT{over-place: $B(\omega^{[k\cdot2^{\ell+t}]_p]}), \dotsc, B(\omega^{[(k+1)\cdot 2^{\ell+t}-1]_p})$}
    \STATE\label{lin:TFTb}$\vec{b}\gets\partTFT_{k,\ell+t}(\vec{b},\omega)$
	\FOR{$s=0$ \To $2^t-1$}\hfill\COMMENT{over-place: $A(\omega^{[(k\cdot2^t+s)2^\ell]_p]}), \dotsc, A(\omega^{[(k\cdot 2^t+s+1)2^\ell-1]_p})$}
	\STATE\label{lin:TFTa}$\vec{a}\gets\partTFT_{s+k\cdot2^t,\ell}(\vec{a},\omega)$
	\ForDoEnd{$i=0$ \To $2^\ell-1$}{$c_{i+(k\cdot 2^t+s)2^\ell} \pe  a_i b_{i+s\cdot 2^\ell}$}
	\STATE $\vec{a}\gets\partTFT^{-1}_{s+k\cdot2^t,\ell}(\vec{a},\omega)$
	\hfill\COMMENT{undo~\ref{lin:TFTa} over-place}
	\ENDFOR
    \STATE $\vec{b}\gets\partTFT^{-1}_{k,\ell+t}(\vec{b},\omega)$
    \hfill\COMMENT{undo~\ref{lin:TFTb} over-place}
    \STATE $r\me 2^{\ell+t}$
\ENDWHILE
\RETURN $\vec{c}\gets \brTFT^{-1}_{m+n-1}(\vec{c},\omega)$
\end{algorithmic}
\end{algorithm}

\begin{theorem}
    \Cref{alg:fftaccu} is correct and in-place.  If the algorithm
    $\brDFT$ used inside $\partTFT$ has complexity $\bigO{n\log n}$, then the running
    time of~\cref{alg:fftaccu} is $\bigO{n\log n}$.
\end{theorem}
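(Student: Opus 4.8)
The plan is to run the usual evaluation--interpolation scheme in TFT space, but tracking exactly which evaluation point each array entry holds at every step. Write $g$ for a global evaluation index in $\{0,\dots,m+n-2\}$. The target output in TFT space is $\brTFT_{m+n-1}(C+AB,\omega)$, whose $g$-th entry is $(C+AB)(\omega^{[g]_p}) = C(\omega^{[g]_p}) + A(\omega^{[g]_p})\,B(\omega^{[g]_p})$, since evaluation at the fixed point $\omega^{[g]_p}$ is a ring homomorphism. The opening over-place $\brTFT_{m+n-1}(\vec c,\omega)$ puts $C(\omega^{[g]_p})$ into $c_g$, so it suffices to add $A(\omega^{[g]_p})B(\omega^{[g]_p})$ to every $c_g$ exactly once; the closing $\brTFT^{-1}_{m+n-1}$ then recovers $C+AB$, using invertibility of the length-$(m+n-1)$ TFT (the cited over-place TFT result) and $m+n-1\le 2^p$ so that the degree-$(m+n-2)$ product is captured.

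For correctness I would maintain a loop invariant on the outer \textbf{while}: at the top of each iteration every $c_g$ with $g$ in the already-processed prefix holds $(C+AB)(\omega^{[g]_p})$, the remaining $c_g$ still hold $C(\omega^{[g]_p})$, and $\vec a,\vec b$ equal their input coefficient arrays. Restoration of $\vec a,\vec b$ is immediate from \cref{cor:partTFT}: each $\partTFT$ on \cref{lin:TFTb,lin:TFTa} is undone by its over-place inverse before the iteration ends. The substantive step is index alignment inside one iteration: by \cref{cor:partTFT,lem:roche}, after \cref{lin:TFTb} one has $b_j=B(\omega^{[k\cdot 2^{\ell+t}+j]_p})$ for $0\le j<2^{\ell+t}$, and after \cref{lin:TFTa} one has $a_i=A(\omega^{[(k\cdot 2^t+s)2^\ell+i]_p})$ for $0\le i<2^\ell$. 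Setting $g=(k\cdot 2^t+s)2^\ell+i$ and observing $k\cdot 2^{\ell+t}+(i+s\,2^\ell)=g$, the accumulation $c_{g}\pe a_i\,b_{i+s\,2^\ell}$ adds exactly $A(\omega^{[g]_p})B(\omega^{[g]_p})$. One then checks that as $(k,s,i)$ range over the loops, $g$ sweeps $\{0,\dots,m+n-2\}$ once: for fixed $k$ the inner double loop on $s\in\{0,\dots,2^t-1\}$, $i\in\{0,\dots,2^\ell-1\}$ covers a contiguous window of $2^{\ell+t}$ consecutive values of $g$, and the outer loop consumes these windows through $r\me 2^{\ell+t}$, so that $\sum 2^{\ell+t}=m+n-1$ gives a partition. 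I expect this bookkeeping to be the main obstacle: reconciling the three index regimes of $\partTFT$ for $\vec a$, $\vec b$ and $\vec c$ via the bit-reversal and divisibility identities of \cref{lem:roche}, and in particular pinning down the exact choice of $k,\ell,t$ on \cref{lin:boundslt,lin:boundsk} so that every window is in range (the precondition $(k{+}1)2^{\ell+t}\le 2^p$ and $2^{\ell+t}\le n$, $2^\ell\le m$ of \cref{cor:partTFT}) and the windows tile $\{0,\dots,m+n-2\}$ without overlap or gap.

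In-placeness is then routine: the two length-$(m+n-1)$ transforms on $\vec c$ and all $\partTFT^{\pm1}$ calls are over-place, $\vec a,\vec b$ are restored by their paired inverses, and only $\bigO 1$ scalar and pointer registers are used beyond the call stack of the underlying $\brDFT$, so the whole procedure is in-place with accumulation in the model of \cref{ssec:inplace}.

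For the complexity I would bound three contributions. The opening and closing TFTs cost $\bigO{(m+n)\log(m+n)}=\bigO{n\log n}$ since $m\le n$. Each $\partTFT_{k,\ell}$ costs $\bigO{n+\ell\,2^\ell}$ by \cref{cor:partTFT}, namely an $\bigO{n}$ twiddle pass over the whole array plus an $\bigO{\ell\,2^\ell}$ small transform. Because $2^{\ell+t}=2^{\lfloor\log_2\min\{r,n\}\rfloor}$ is at least half of $\min\{r,n\}$, $r$ essentially halves once $r<n$, so there are $\bigO 1$ outer iterations while $r\ge n$ and $\bigO{\log n}$ afterwards. Summing the small-transform parts uses $\sum 2^{\ell+t}=m+n-1=\bigO n$ with $\ell+t\le\log_2 n$, giving $\bigO{n\log n}$; summing the $\bigO n$ twiddle passes of the $B$-transforms over $\bigO{\log n}$ iterations again gives $\bigO{n\log n}$. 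The $A$-transforms are called $\sum 2^t$ times, each with an $\bigO m$ twiddle pass; the geometric decay of $r$ yields $\sum 2^t=\bigO{n/m+\log m}$, hence $m\sum 2^t=\bigO{n+m\log m}=\bigO{n\log n}$. Adding the three contributions gives the claimed $\bigO{n\log n}$, roughly twice the non-in-place cost as in the power-of-two case.
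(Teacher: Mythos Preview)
Your proposal is correct and follows essentially the same approach as the paper: correctness via the evaluation--interpolation scheme with a loop invariant that tracks which block of $\vec{c}$ has been updated and that $\vec{a},\vec{b}$ are restored by the paired $\partTFT^{-1}$ calls, in-placeness from \cref{cor:partTFT}, and the complexity bound by splitting into a constant number of iterations while $r\ge n$ and $\bigO{\log n}$ halving iterations afterwards. Your cost accounting is organized a bit differently (you sum the $A$-side twiddle passes via $\sum 2^t=\bigO{n/m+\log m}$ rather than bounding $m\cdot 2^t\le 2n$ per iteration as the paper does), but this is a presentational variant, not a different argument.
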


\begin{proof}
    The fact that the algorithm is in-place comes
    from~\cref{cor:partTFT}.
    The only slight difficulty is to produce, fast
    and in-place, the relevant roots of unity. This is actually dealt with in
    the original over-place TFT algorithm~\cite{Harvey:2010:issactft} and can be
    done the same way here.

    To assess its correctness, first note that the values
    of~\cref{lin:boundslt,lin:boundsk}
    are computed so that $2^\ell\le r,m$ and
    $2^{\ell+t}\le r,n$. One iteration of the while loop updates the entries
    $c_k$ to $c_{k+2^{\ell+t}-1}$ where $k = m+n-1-r$. To this end,
    we first compute $B(\omega^{[k\cdot2^{\ell+t}]_p]})$ to $B(\omega^{[(k+1)\cdot
    2^{\ell+t}-1]_p})$ in $\vec{b}$ using $\partTFT$. Then, since $\vec{a}$ may
    be too small to store $2^{\ell+t}$ values, we compute the corresponding
    evaluations of $A$ by groups of $2^\ell$, using a smaller $\partTFT$. After
    each computation in $\vec{a}$, we update the corresponding entries in
    $\vec{c}$ and restore $\vec{a}$. Finally, at the end of the iteration,
    entries $k$ to $k+2^{\ell+t}-1$ of $\vec{c}$ have been updated and $\vec{b}$
    can be restored. This proves the correctness of the algorithm.

    To bound its complexity, %
    we first bound the number of iterations of the
    while loop. We identify two phases, first iterations where $r \ge n$ and
    then iterations with $r < n$. The first phase has at most $3$ iterations since
    $2^{\ell+t}>\frac{n}{2}$ entries of $\vec{c}$ are updated per iteration.
    The second phase starts with $r < n$ and each iteration updates $2^{\ell+t}>\frac{r}{2}$ entries.
    That is, $r$ is halved and this
    second phase has at most $\log_2{n}$ iterations.
    The cost of an iteration is dominated by the calls to $\partTFT^{\pm 1}$. The cost of
    a call to $\partTFT^{\pm 1}_{k,\ell}$ with a size-$m$ input is the sum of a
    linear term $\bigO{m}$ and a non-linear term $\bigO{\ell\cdot 2^\ell}$.
    At each iteration, there are two calls to $\partTFT^{\pm 1}$ on $\vec{b}$ and
    $2^{t+1}$ calls to $\partTFT^{\pm 1}$ on $\vec{a}$. The linear terms sum to
    $\bigO{n+m\cdot 2^t} = \bigO{n}$ since $m\cdot
    2^t < 2^{\ell+1+t} \le 2n$. Over the $\log_2{n}$ iterations, the global
    cost due to these linear terms is $\bigO{n\log n}$.
    The cost due to the non-linear terms in one iteration is
    $\bigO{(\ell+t)\cdot{2^{\ell+t}}}$. In the first iterations,
    $2^{\ell+t} \le n$ and these costs sum to $\bigO{n\log n}$.
    In the next iterations, $2^{\ell+t} \le r < n$. Since $r$ is
    halved at each iteration, the non-linear costs in these
    iterations sum to
    $\bigOdisplay{\sum_i\frac{n}{2^i}\log\frac{n}{2^i}}=\bigO{n\log{n}}$.
\end{proof}

Then, \cref{alg:fftaccu} is compared with previous FFT-based
algorithms for polynomial multiplications designed to reduce their
memory footprint in~\cref{tab:fft} (see also
\cite[Table~2.2]{Giorgi:2019:hdr}).
Note that no call stack is needed for computing the FFT, therefore
these algorithms only require $\bigO{1}$ pointer registers.

\begin{table}[ht]\centering
\caption{Reduced-memory algorithms for FFT polynomial
  multiplication}\label{tab:fft}
\begin{tabular}{lccc}
\toprule
Algorithm & Algebraic Reg. & Inputs & Accumulation\\
\midrule
\cite{1965:CooleyTukey:MathComp:FFT} & $2n$ & {\color{teal} read-only} & {\xno}\\
\cite{Roche:2009:spacetime} & $\bigO{2^{\lceil\log_2 n\rceil}-n}$ & {\color{teal} read-only} & {\xno}\\
\cite{Harvey:2010:issactft} & \textcolor{teal}{$\bigO{1}$} & {\color{teal} read-only} & {\xno}\\
\cref{alg:fftaccu} & \textcolor{teal}{$\bigO{1}$} & mutable & {\xyes}\\
\bottomrule
\end{tabular}
\end{table}
\section{Conclusion}
We here provide a generic technique mapping any bilinear formula
(and more generally any linear accumulation)
into an in-place algorithm.
This allows us for instance to provide the first accumulating in-place
Strassen-like matrix multiplication algorithm.
This also allows use to provide fast in-place accumulating polynomial
multiplications algorithms.

Many further accumulating algorithm can then be reduced to these fundamental
building blocks, see for instance Toeplitz, circulant, convolutions or
remaindering operations in~\cite{jgd:2023:inplrem}.

\renewcommand\thefootnote{}
\footnotetext{\emph{This material is based on work
supported in part by the Agence Nationale pour la Recherche under Grants
\href{https://anr.fr/Project-ANR-21-CE39-0006}{ANR-21-CE39-0006},
\href{https://anr.fr/ProjetIA-15-IDEX-0002}{ANR-15-IDEX-0002} and
\href{https://anr.fr/ProjetIA-22-PECY-0010}{ANR-22-PECY-0010}.}}

%
%
%

%
\bibliographystyle{plainurl}
\bibliography{shortbib}
\end{document}